\documentclass[11pt, oneside]{article}   	
\usepackage{geometry}                		
\geometry{letterpaper}                   		
\usepackage{graphicx}				
\usepackage{amssymb}
\usepackage{float}
\usepackage{amssymb}
\usepackage{amsmath}
\usepackage{amsthm}
\usepackage{color}
\usepackage{graphicx}
\usepackage{hyperref}
\usepackage{fullpage}
\usepackage{graphicx}
\usepackage{mathtools}
\usepackage{microtype}
\usepackage{bbm}
\usepackage{booktabs}

\usepackage[linesnumbered,algoruled,titlenotnumbered,vlined]{algorithm2e}
\DontPrintSemicolon

\renewenvironment{proof}{\vspace{-0.05in}\noindent{\bf Proof:}}%
        {\hspace*{\fill}$\Box$\par}
\newenvironment{proofof}[1]{\smallskip\noindent{\bf Proof of #1:}}%
        {\hspace*{\fill}$\Box$\par}
        {\hspace*{\fill}$\Box$\par}

\usepackage{algorithmic}

\usepackage{lmodern}
\usepackage[T1]{fontenc}
\usepackage[para,online,flushleft]{threeparttable}
\usepackage{multirow}

\newtheorem{theorem}{Theorem}
\newtheorem{lemma}[theorem]{Lemma}
\newtheorem{definition}[theorem]{Definition}

\newtheorem{proposition}[theorem]{Proposition}



\newcommand{\ie}{{\em i.e.}}

\title{Performance Analysis of Modified SRPT in Multiple-Processor Multitask Scheduling}
\author{Wenxin Li \\
Department of ECE\\
The Ohio State University\\
{\tt li.7328@osu.edu}\\
\and
Ness Shroff \\
Department of ECE and CSE\\
The Ohio State University\\
{\tt shroff.11@osu.edu}
}

\begin{document}
\maketitle
\begin{abstract}
In this paper we study the multiple-processor multitask scheduling problem in both deterministic and stochastic models, where each job have several tasks and is complete only when all its tasks are finished. We consider and analyze Modified Shortest Remaining Processing Time (M-SRPT) scheduling algorithm, a simple modification of SRPT, which always schedules jobs according to SRPT whenever possible, while processes tasks in an arbitrary order. The M-SRPT algorithm is proved to achieve a competitive ratio of $\Theta(\log \alpha +\beta)$ for minimizing response time, where $\alpha$ denotes the ratio between maximum job workload and minimum job workload, $\beta$ represents the ratio between maximum non-preemptive task workload and minimum job workload. In addition, the competitive ratio achieved is shown to be optimal (up to a constant factor), when there are constant number of machines. We further consider the problem under Poisson arrival and general workload distribution (\ie, M/GI/$N$ system), and show that M-SRPT achieves asymptotic optimal mean response time when the traffic intensity $\rho$ approaches $1$, if job size distribution has finite support. Beyond finite job workload, the asymptotic optimality of M-SRPT also holds for infinite job size distributions with certain probabilistic assumptions, for example, M/M/$N$ system with finite task workload. As a special case, we show that M-SRPT is asymptotic optimal in M/M/$1$ model, in which the task size distribution is allowed to have infinite support.


\end{abstract}
\section{Introduction}
With widespread applications in various manufacturing industries, scheduling jobs to minimize the total flow time (also known as response time, sojourn time and delay) is a fundamental problem in operation research that has been extensively studied. As an important metric measuring the quality of a scheduler, flow time, is formally defined as the difference between job completion time and releasing date, and characterizes the amount of time that the job spends in the system. 

Optimizing the flow time of single-task jobs has been considered both in offline and online scenarios. If preemption is allowed, the \emph{shortest remaining processing time} (SRPT) discipline is shown to be optimal in single machine environment. Many generalizations of this basic formulation become NP-hard, for example, minimizing the total flow time in non-preemptive single machine model and preemptive model with two machines~\cite{leonardi1997approximating}. When jobs arrive online, no information about jobs is known to the algorithm in advance, several algorithms with logarithmic competitive ratios are proposed in various settings~\cite{ azar2018improved,leonardi1997approximating}. On the other hand, while SRPT minimizes the mean response time sample-path wise, it requires the knowledge of remaining job service time. Gittins proved that the Gittins index policy minimizes the mean delay in an M/G/1 queue, which only requires the access to the information about job size distribution.

However, jobs with multiple tasks are more common and relevant in practice, which can take many different forms in modern computing environments. For example, for the objective of computing matrix vector product, we can divide matrix elements and vector elements into groups of columns and rows respectively, then the tasks correspond to the block-wise multiplication operations. Tasks can also be map, shuffle and reduce procedures in MapReduce framework. With the tremendous increasing in data size and job complexity, we cannot emphasize too much the importance of designing scheduling algorithms for jobs with multiple tasks. Though much progresses have been made in single-task job scheduling, there is a lack of theoretical understanding regarding \emph{multiple-processor multitask scheduling} (MPMS), where a job is considered to be completed only when all the tasks within the job are finished. A natural question that arises is, \emph{how to design an efficient scheduling algorithm to minimize the total amount time that the multitask jobs spend in the system}.

\paragraph{Related Work.}
There has been a large literature on single-task job scheduling, with parallel developments taking place in competitive analysis and queuing theory. However, little is known about multitask scheduling. Scully et. al~\cite{scully2017optimally} presented the first theoretical analysis of \emph{single-processor multitask scheduling} problem, and gave an optimal policy that is easy to compute for batch arrival, together with the assumption that the processing time of tasks satisfies the aged Pareto distributions. Sun et al.~\cite{sun2017near} studied the multitask scheduling problem when all the tasks are of unit size, and proved that among causal and non-preemptive policies, fewest unassigned tasks first (FUT) policy, earliest due date first (EDD) policy, and first come first serve (FCFS) are near delay-optimal in distribution (stochastic ordering) for minimizing the metric of average delay, maximum lateness and maximum delay respectively. To model the scenario when the scheduler has incomplete information about the job size, Scully et. al~\cite{scully2018optimal} introduced the multistage job model and proposed an optimal scheduling algorithm for multistage job scheduling in M/G/1 queue. In addition, the closed-form expression of the mean response time is given for the optimal scheduler. 

As an concrete example of multiple-processor multitask scheduling, there is a separate line of work focusing on the MapReduce framework. Here we only mention a few as examples. Wang et al.~\cite{wang2016maptask} studied the problem of scheduling map tasks with data locality, and proposed a map task scheduling algorithm consisting of the Join the Shortest Queue policy and MaxWeight policy. The algorithm asymptotically minimizes the number of backlogged tasks (which is directly related to the delay performance based on Little's law), when the arrival rate vector approaches the capacity region boundary. Zheng et al.~\cite{zheng2013new} proposed an online scheduler called available shortest remaining processing time (ASRPT), which is shown to achieve an efficiency ratio no more than two.

\paragraph{Contributions.}
In this paper, we investigate how to minimize the total response time of multitask jobs in a multi-server system and answer the aforementioned question. Our contributions are summarized as follows.

\begin{itemize}
\item We first propose Algorithm~\ref{schedulingalgo}, the Modified SRPT algorithm, for minimizing the total response time. Algorithm~\ref{schedulingalgo} is a simple modification of SRPT and achieves a competitive ratio of $O(\log \alpha+\beta)$, where $\alpha$ is the maximum-to-minimum job workload ratio, $\beta$ represents the ratio between maximum non-preemptive task workload and minimum job workload. It can be shown that no $o(\log \alpha+\beta)$-competitive algorithm exists when the number of machines is constant. In addition, $O(\log \alpha+\beta^{1-\varepsilon})$ is the best possible competitive ratio for the class of work-conserving algorithms.
%
\item Besides the worst case relative ratio above, we further prove our main result, absolute performance guarantees for Algorithm~\ref{schedulingalgo} under certain probabilistic structure on the input instances, in which the remaining workload bound established for the adversarial inputs contributes significantly to the stochastic analysis. Assuming that jobs arrive according to a Poisson process, \ie, in M/GI/$N$ system, we prove that the average response time incurred by Algorithm~\ref{schedulingalgo} is asymptotic optimal when load $\rho \rightarrow 1$, as long as the job size distribution has finite support. The assumption of finite job service time can be relaxed to finite task workload for exponentially distributed job size, \ie, M/M/$N$, together with other infinite distributions with certain properties on the tail of the distribution.
Last but not least, we prove the asymptotic optimality of Algorithm~\ref{schedulingalgo} in M/M/$1$ without the bounded task size assumption.
\end{itemize}



The remainder of this paper is organized as following. We introduce the problem definition, notations and necessary background in Section \ref{modelpre}. In Section \ref{deterministicalgo} we formally present Modified SRPT algorithm, together with the analysis of its competitive ratio and lower bounds. Section \ref{heavytraffic} is devoted to the proof of the asymptotic optimality of Modified SRPT in heavy traffic regime, together with the extensions to infinite job size distributions. We conclude our work in Section \ref{seccon}.

\section{Model and preliminaries}\label{modelpre}
\paragraph{Deterministic Model.}
We are given a set $\mathcal{J}=\{J_{1}, J_{2}, \ldots, J_{n}\}$ of $n$ jobs arriving online over time, together with a set of $N$ identical machines. Job $i$ consists of $n_{i}$ tasks and its workload $p_{i}$ is equal to the total summation of the processing time of tasks, \ie, $p_{i}=\sum_{\ell \in n_{i}}{p_{i, \ell}}$, where $p_{i, \ell}$ represents the processing time of the $\ell$-th task of job $i$. Tasks can be either preemptive or non-preemptive. A task is non-preemptive if it is not allowed to interrupt the task once it starts service, \ie, the task is run to completion. All the information of job $i$ is unknown to the algorithm until its releasing date $r_{i}$. Under any given scheduling algorithm, the completion time of job $j$ under the algorithm, denoted by $C_{j}$, is equal to the maximum completion time of individual tasks within the job. Formally, let $C^{(\ell)}_{j}$ be the completion time of task $\ell$ in job $j$, then $C_{j}=\max_{\ell\in [n_{i}]}{C^{(\ell)}_{j}}$. The response time of job $j$ is defined as $F_{j}=C_{j}-r_{j}$, our objective is to minimize the total response time $\sum_{j\in [n]}{F_{j}}$. 

Throughout the paper we use $\alpha=\max_{i\in [n]}{p_{i}}/\min_{i\in [n]}{p_{i}}$ to denote the ratio of the maximum to the minimum job workload. Let $\eta=\max\{p_{i, \ell}| \mbox{ task } \ell \mbox{ of job } i \mbox{ is non-preemptive}\}$ be the maximum processing time of a non-preemptive task, $\beta=\eta/\min_{i\in [n]}{p_{i}}$ be the ratio between $\eta$ and minimum job workload. In some sense, parameters $\beta$ and $\eta$ represent the degree of non-preemptivity and exhibits a trade-off between the preemptive and non-preemptive setting. More specifically, the problem approaches the preemptive case when $\eta$ is small, and degenerates to the non-preemptive case if all the jobs are consisted of a single non-preemptive task, in which $\eta$ reaches the maximum value of $\max_{i\in [n]}{p_{i}}$.

The definitions of work-conserving algorithms and competitive ratios are formally given as following.

\begin{definition}[Work-conserving scheduling algorithm] A scheduling algorithm $\pi$ is called work-conserving if it never idles machines when there exists at least one feasible job or task awaiting the execution in the system. Here a job or task is called feasible, if it satisfies all the given constraints of the system (e.g, preemptive and non-preemptive constraint, precedence constraint, etc).
\end{definition}

\begin{table}[H]
\centering
\begin{tabular}{rp{0.63\textwidth}}
\toprule
$N$ & number of machines\\
\midrule
$n$ &  number of jobs\\
\midrule
$r_{i}$ & arrival time of job $i$\\
\midrule
$p_{i}$& total workload of job $i$\\
\midrule
$\eta$ & maximum workload of a single non-preemptive task\\
\midrule
$\alpha$ & job size ratio: $\alpha=\max_{i\in [n]}{p_{i}}/\min_{i\in [n]}{p_{i}}$\\
\midrule
$\beta$ & relative ratio of the longest non-preemptive task: $\eta/\min_{i\in [n]}p_{i}$\\
\midrule
$\rho$ & traffic intensity $\rho=\mathbbm{E}[p_{i}]/(N\cdot \mathbbm{E}[\Delta r_{i}])$\\
\midrule
$\rho(y)$ & load composed of jobs with size $0$ to $y$: $\rho(y)=\lambda \cdot \int_{0}^{y}{tf(t)dt}$ \\
\midrule
$F^{\pi}_{\rho}$& job average response time under algorithm $\pi$ and load $\rho$  \\
\bottomrule
\end{tabular}
  \caption{Notation Table}
  \label{tb:notation}
\end{table}

\begin{definition}[Competitive ratio] The competitive ratio of online algorithm $\mathcal{A}$ refers to the worst ratio of the cost incurred by $\mathcal{A}$ and that of optimal offline algorithm $\mathcal{A}^{*}$ over all input instances $\omega$ in $\Omega$, \ie,
\begin{align*}
\mathcal{CR}_{\mathcal{A}}=\max_{\omega \in \Omega}	\frac{\mathrm{Cost}_{\mathcal{A}}(\omega)}{\mathrm{Cost}_{\mathcal{A^{*}}}(\omega)}.
\end{align*}
In the multiple-processor multitask scheduling problem, the cost is the total response time under instance $\omega=\{(r_{i}, \{p_{i, \ell}\}_{\ell \in [n_{i}]})\}_{i\in [n]}$.	
\end{definition}


%

\paragraph{Stochastic Model.}
In the stochastic setting, we assume that jobs arrive into the system according to a Poisson process with rate $\lambda$. Job processing times are i.i.d distributed with probability density function $f(\cdot)$. Formally,  we consider a sequence of M/GI/$N$ queues indexed by $n$, the traffic intensity of the $n$-th system is equal to $\rho^{(n)}=\lambda^{(n)}\cdot \mathbbm{E}[p^{(n)}_{i}]$, where $\lambda^{(n)}$ denotes the arrival rate of the $n$-th Poisson arrival process, job workload distribution has a density function of $f^{(n)}(\cdot)$. Stability of the queuing systems requires that $\rho^{(n)}<1$ for $\forall n$. As standard in the literature, we assume that $\rho^{(n)}\rightarrow1$ when $n\rightarrow \infty$. In this paper, we further assume that the probability density function $f^{(n)}(\cdot)$ is continuous. For notational convenience, we will suppress index $n$ whenever it is clear from the context. 

The stochastic analysis in this paper relies heavily on the concept of busy period, which is defined as following.

\begin{definition}[Busy Period~\cite{harchol2013performance}]
Busy period is defined to be the longest time interval in which no machines are idle. 
\end{definition}
\noindent We use $\mathsf{B}(w)$ to denote the length of a busy period with started by a workload of $w$.
It can be seen that $\mathsf{B}(\cdot)$ is an additive function~\cite{harchol2013performance}, \ie, $\mathsf{B}(w_{1}+w_{2})=\mathsf{B}(w_{1})+\mathsf{B}(w_{2})$ for $\forall w_{1}, w_{2}$, 
since a busy period with initial workload of $w_{1}+w_{2}$ can be regarded as a busy period started by initial workload $w_{2}$, following a busy period started by initial workload $w_{1}$. Moreover, for M/GI/$1$ queue, the length of a busy period with initial workload of $w$ and load $\rho$ is shown to be equal to $\mathsf{B}(w)=\mathbbm{E}[w]/(1-\rho)$ \cite{harchol2013performance}.

\subsection{Notations}
Notations of this paper are summarized in Table~\ref{tb:notation}. Most of our analysis are presented using asymptotic notations. We say $f(n)=o(g(n))$, $f(n)=O(g(n))$, $f(n)=\Theta(g(n))$, $f(n)=\Omega(g(n))$ if and only if $\limsup_{n \rightarrow \infty}{f(n)/g(n)}=0$, $\limsup_{n \rightarrow \infty}{f(n)/g(n)}<\infty$, $0<\liminf_{n\rightarrow \infty}{f(n)/g(n)}\leq \limsup_{n\rightarrow \infty}{f(n)/g(n)}<\infty$ and $\liminf_{n\rightarrow \infty}{f(n)/g(n)}>0$ respectively. All these notations only hide quantities that do not scale with $n$ (or $\rho^{(n)}$).




\section{Challenges with multi-task scheduling}

SRPT and its analysis do not easily generalize to multi-task scenario, due to the non-preemptivity of tasks. Firstly, in the analysis of single-task scheduling, the server only processes relevant work during the waiting time of the tagged job, under SRPT discipline. This holds for both single-server and multi-server settings. When jobs have multiple tasks, a challenge is raised: the system might be dealing with non-preemptive task of irrelevant jobs upon the arrival of the tagged job. It is unknown how the algorithm should be designed and how the amount of irrelevant workload involved can be bounded. Secondly, \cite{grosof2018srpt} bound the relevant work by comparing the multi-server SRPT system with single server system using SRPT. The analysis relies on the following fact: the workload difference is bounded in few-jobs interval and is non-decreasing in many-jobs interval, since the two systems are experiencing identical arrival sequence, while multi-server SRPT processes relevant workload at a maximum rate in such interval. However, in multi-task scheduling, the workload difference might be decreasing in many-jobs interval as resources might be used to process irrelevant jobs.

\section{Modified SRPT Algorithm and Competitive Ratio Analysis}\label{deterministicalgo}
The details of the Modified SRPT algorithm are specified in Algorithm \ref{schedulingalgo}. At each time slot $t$, jobs with non-preemptive task are kept processing on the machines, while the remaining machines are used to process jobs with smallest remaining workload. The main idea of Algorithm~\ref{schedulingalgo} is similar to SRPT, \ie, we utilize as many resources as possible on the job with smallest remaining workload, to reduce the number of alive jobs in a greedy manner, while satisfying the non-preemptive constraint. 

\begin{algorithm}[H]
\small
    \caption{Modified SRPT (M-SRPT)}
\label{schedulingalgo}
At time $t$, maintain the following quantities:\\
\begin{itemize}
\item For each job $i\in [n]$, maintain
\begin{itemize}
\item $W_{i}(t)$ \tcp{remaining workload}
\item $w_{i}(t)$ \tcp{remaining workload of the shortest single task being processed (if exists) or alive}
\end{itemize}
\item $\mathcal{J}_{1}(t)\leftarrow \{i\in [n]|w_{i}(t)=0\}$\tcp{Jobs with tasks that are finished at time $t$}
\item $\mathcal{J}_{2}(t)\leftarrow \{i\in [n]|\mbox{task of job $i$ is being processed at time $t$ and is preemptive}\}$
\end{itemize}
and assign alive jobs to the $|\mathcal{J}_{1}(t)\cup \mathcal{J}_{2}(t)|$ machines, where jobs with smaller value of $W_{i}(t)$ have a higher priority. 
\end{algorithm}

\subsection{Performance Analysis} 
Our main result is stated in the following theorem.
\begin{theorem}\label{algotheorem}
Algorithm~\ref{schedulingalgo} achieves a competitive ratio that is no more than 
\begin{align*}
\mathcal{CR}_{\mathrm{M-SRPT}}\leq 4\log \alpha+2\beta+8.
\end{align*}
\end{theorem}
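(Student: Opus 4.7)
The plan is to adapt the Leonardi--Raz style analysis of multiprocessor SRPT (as referenced in the paper) by tracking remaining workload in dyadic size classes and paying an additive $\beta$ price for non-preemptive tasks. The starting observation is the standard rewriting $\sum_{j} F_{j} = \int_{0}^{\infty} n(t)\, dt$, where $n(t)$ is the number of alive jobs at time $t$; the same holds for OPT with $n^{*}(t)$. Thus it suffices to bound $\int n^{A}(t)\,dt$, where $A$ denotes M-SRPT, in terms of $\int n^{*}(t)\,dt$ plus a small additive error, since $\sum_{j} F_{j}^{*} \geq \sum_{j} p_{j}/N$ and $\sum_{j} F_{j}^{*} \geq n\cdot p_{\min}$ will absorb additive terms.

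Next I would classify jobs into $K+1=\lceil\log \alpha\rceil+1$ dyadic size classes: class $k$ consists of jobs with $p_{i}\in (p_{\min}\cdot 2^{k-1},p_{\min}\cdot 2^{k}]$. For each threshold $k$, let $W^{A}_{\leq k}(t)$ and $W^{*}_{\leq k}(t)$ be the remaining workloads of alive jobs of class at most $k$ under M-SRPT and OPT respectively. The core technical step is to establish the workload inequality
\begin{align*}
W^{A}_{\leq k}(t)\;\leq\; W^{*}_{\leq k}(t)\;+\;N\cdot p_{\min}\cdot 2^{k}\;+\;N\cdot\eta.
\end{align*}
This is shown by an exchange/invariance argument: whenever M-SRPT is not shrinking $W^{A}_{\leq k}$ at rate $N$, it is because either (i) fewer than $N$ class-$\leq k$ jobs are alive, in which case the first additive term covers the discrepancy via the standard SRPT-on-$N$-machines slack of $N\cdot p_{\max,k}$, or (ii) some of the $N$ machines are locked on non-preemptive tasks that cannot be preempted in favor of smaller remaining work, which contributes at most $N\cdot \eta$ of ``stuck'' workload at any moment, since at most $N$ such tasks run concurrently and each has length $\leq\eta$.

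From this workload inequality I would deduce an alive-jobs inequality. Since every class-$k$ job under M-SRPT contributes at least $p_{\min}\cdot 2^{k-1}$ to $W^{A}_{\leq k}(t)$, I get
\begin{align*}
n^{A}_{k}(t)\;\leq\;\frac{W^{A}_{\leq k}(t)}{p_{\min}\cdot 2^{k-1}}\;\leq\;\frac{W^{*}_{\leq k}(t)}{p_{\min}\cdot 2^{k-1}}\;+\;2N\;+\;\frac{2N\eta}{p_{\min}\cdot 2^{k}}.
\end{align*}
Bounding $W^{*}_{\leq k}(t)\leq n^{*}_{\leq k}(t)\cdot p_{\min}\cdot 2^{k}$ and summing over $k=0,\ldots,K$, the first term yields $\sum_{k} n^{A}_{k}(t)\leq 2\sum_{k} n^{*}_{\leq k}(t)$, which integrates to at most $2(K+1)\int n^{*}(t)\,dt$, contributing the $4\log\alpha$ factor. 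The remaining additive errors $2N$ and $2N\eta/(p_{\min}\cdot 2^{k})$ (geometric in $k$, dominated by $k=0$) integrate to at most a constant times $(N+N\beta)\cdot T^{*}$, where $T^{*}$ is the total time any job is alive under OPT; using $N\cdot T^{*}\leq \sum_{j} p_{j}\leq \sum_{j} F_{j}^{*}\cdot N$ and $n\cdot p_{\min}\leq \sum_{j} F_{j}^{*}$ absorbs these into the $2\beta+8$ additive terms.

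The main obstacle will be the workload inequality in the presence of non-preemptive tasks. In vanilla multiprocessor SRPT the slack term is $N\cdot p_{\max,k}$ and arises only when class-$\leq k$ jobs run out; here, however, a non-preemptive task of a large (class $> k$) job may occupy a machine that SRPT would otherwise devote to a smaller class-$\leq k$ job, creating extra slack that does not appear in the classical argument. Carefully charging this slack to the $\eta$ parameter, uniformly over the $N$ machines and all dyadic thresholds $k$, is the one place where the analysis genuinely departs from~\cite{leonardi2007approximating,grosof2018srpt}, and is what produces the $+2\beta$ in the stated bound rather than the $+O(1)$ of the purely preemptive case.
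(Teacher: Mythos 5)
Your overall strategy matches the paper's: divide jobs into dyadic classes, prove a workload-difference inequality ($W^A_{\leq k}(t)\leq W^*_{\leq k}(t)+N\cdot 2^k+N\eta$), convert that into a per-time bound on the number of alive jobs, and integrate. Your workload inequality is exactly the paper's Lemma~\ref{workloadlemma}, and your two-pronged intuition for it (idle/waste slack $+$ non-preemptive lock slack) corresponds to the paper's Cases 1 and 2.

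There is, however, one genuine error in the way you set things up. You classify jobs by \emph{original} size $p_i$, writing class $k$ as jobs with $p_i\in(p_{\min}2^{k-1},p_{\min}2^k]$, but then claim ``every class-$k$ job contributes at least $p_{\min}\cdot 2^{k-1}$ to $W^A_{\leq k}(t)$.'' Under original-size classification this is false: a class-$k$ job that has been partially served can have arbitrarily small remaining workload, so $n_k^A(t)\leq W^A_{\leq k}(t)/(p_{\min}2^{k-1})$ does not hold. The paper (following Leonardi--Raz) classifies by \emph{remaining} workload $W_i(\pi,t)\in(2^{k-1},2^k]$, which is what makes the job-count-from-workload step valid. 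The price of that choice is that jobs migrate between classes as they are served, and bounding the workload that \emph{enters} $\mathcal{C}_{[k]}$ from higher classes is precisely the technical content of the paper's Cases 2.1 and 2.2 of Lemma~\ref{workloadlemma} (bounding the number of switching jobs $|\mathcal{J}_s|$ by $N-r$, and bounding the locked interval by $\eta$). Your sketch lumps all of this into ``$N\cdot\eta$ of stuck workload,'' which is the right magnitude but skips the argument that switching events are limited to $N$ per locked interval rather than $N$ per time slot. Two smaller points: the paper avoids the multiplicative $O(\log\alpha)$ you pick up in the counting step by telescoping $\sum_k W^{(k)}/2^k$ into $\sum_k W^{[k]}/2^{k+1}$, yielding a clean $n^A(t)\leq 4 n^*(t)+N(4\log\alpha+4\beta+4)$; and your final absorption step integrates the additive error against $T^*$ rather than the time M-SRPT has jobs alive, whereas the paper's bookkeeping splits time into $\{t:n^A(t)<N\}$ and $\{t:n^A(t)\geq N\}$ and charges each piece separately to $F^{\pi^*}$.
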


To show the competitive ratio above, we divide the jobs into different classes and compare the remaining number of jobs under Algorithm~\ref{schedulingalgo} with that under optimal algorithm $\pi^{*}$. For any algorithm $\pi$, at time slot $t$, we divide the unfinished jobs into $\Theta(\log \alpha)$ classes $\{\mathcal{C}_{k}(\pi, t)\}_{k\in [\log \alpha +1]}$, based on their remaining workload. Jobs with remaining workload that is no more than $2^{k}$ and larger than $2^{k-1}$ are assigned to the $k$-th class. Formally, 
\begin{align*}
\mathcal{C}_{k}(\pi,t)=\Big\{i\in [n]\;\Big|\;W_i(\pi,t)\in (2^{k-1},2^{k}]\Big\},
\end{align*}
where $W_{i}(\pi,t)$ represents the unfinished workload of job $i$ at time $t$. In the following analysis, we use $\mathcal{C}^{[k]}(\pi, t)=\cup_{i=1}^{k}\mathcal{C}_{i}(\pi, t)$ to denote the collection of jobs in the first $k$ classes, and let $W_{\pi}^{[k]}(t)=\sum_{i=1}^{k} {W_{\pi}^{(i)}(t)}$ represent the total remaining workload of jobs in the first $k$ classes, where $W_{\pi}^{(k)}(\pi, t)$ denotes the amount of remaining workload of jobs in class $\mathcal{C}_k(\pi,t)$. $W_{\pi^{*}}^{(k)}(t)$ and $W_{\pi^{*}}^{[k]}(t)$ are defined in a similar way for the optimal scheduling algorithm $\pi^{*}$. 

Similar to the proof in \cite{leonardi2007approximating}, we first show the following lemma, which relates the remaining workload under M-SRPT with that under optimal algorithm $\pi^{*}$, then complete the proof of Theorem \ref{algotheorem} in Appendix \ref{appendixcomproof}.

\begin{lemma}\label{workloadlemma} For $\forall k, t\geq 0$, the unfinished workload under Algorithm~\ref{schedulingalgo} can be upper bounded as
\begin{align}\label{workloaddiff}
W_{\mathrm{M-SRPT}}^{[k]}(t)\leq W_{\pi^{*}}^{[k]}(t)+N \cdot (2^{k+1}+\eta).
\end{align}
\end{lemma}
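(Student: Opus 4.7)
The plan is to adapt the classical Leonardi--Raz potential argument for SRPT on parallel machines to handle non-preemptive tasks. Fix the class index $k$ and the time $t$, and let $t_0$ be the latest time $\leq t$ at which either $t_0 = 0$, or at least one machine of M-SRPT is idle or is processing a job with remaining workload strictly greater than $2^k$. By this choice, for every $s \in (t_0, t)$ all $N$ machines of M-SRPT are simultaneously processing jobs in $\mathcal{C}^{[k]}(\mathrm{M-SRPT}, s)$. In particular, no job descends into $\mathcal{C}^{[k]}$ from a higher class under M-SRPT during $(t_0, t)$, since none of the $>2^k$-jobs receives any service, so $W_{\mathrm{M-SRPT}}^{[k]}$ is depleted at rate exactly $N$ between arrivals of new jobs of total size $\leq 2^k$.

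Writing $A$ for the aggregate workload of jobs arriving in $(t_0, t]$ with total size at most $2^k$, the structural observation gives
\begin{align*}
W_{\mathrm{M-SRPT}}^{[k]}(t) = W_{\mathrm{M-SRPT}}^{[k]}(t_0) + A - N(t - t_0).
\end{align*}
Since $\pi^*$ sees the same arrivals and can reduce its own $\mathcal{C}^{[k]}$ workload at rate at most $N$, we also have $W_{\pi^*}^{[k]}(t) \geq A - N(t - t_0)$. Subtracting the two relations and using $W_{\pi^*}^{[k]}(t) \geq 0$ reduces the lemma to proving the boundary estimate $W_{\mathrm{M-SRPT}}^{[k]}(t_0) \leq N(2^{k+1} + \eta)$.

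The boundary estimate is proved by case analysis on the reason for the choice of $t_0$. If $t_0 = 0$, then $W_{\mathrm{M-SRPT}}^{[k]}(0) = 0$. If at $t_0$ some machine is processing a $>2^k$-job $j$ and no non-preemptive constraint is forcing this choice, then the SRPT priority implies fewer than $N$ alive jobs lie in $\mathcal{C}^{[k]}$, each of workload at most $2^k$, giving $W_{\mathrm{M-SRPT}}^{[k]}(t_0) \leq N \cdot 2^k$. In the non-preemptive sub-case, the current non-preemptive task of $j$ has been running for at most $\eta$ time units; tracing back to its start time $t'$ the preemptive bound gives $W_{\mathrm{M-SRPT}}^{[k]}(t') \leq N \cdot 2^k$, and over $(t', t_0]$ new arrivals and transitions can add at most a further $N\eta$ to the $\mathcal{C}^{[k]}$ workload because arrivals contribute at a rate controlled by the at most $N$ non-stuck machines acting on $\mathcal{C}^{[k]}$ in this short window.

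The main obstacle is controlling the non-preemptive sub-case cleanly: several machines may simultaneously be stuck on different non-preemptive tasks of $>2^k$-jobs, each started at a different past time $t'$, and the trace-back has to aggregate across them without double counting. The key amortization is that every non-preemptive task has workload at most $\eta$ and at most $N$ such tasks can run concurrently, so the total non-preemptive overrun contributes at most $N\eta$ on top of the standard Leonardi--Raz deficit of order $N \cdot 2^k$; combining gives $W_{\mathrm{M-SRPT}}^{[k]}(t_0) \leq N(2^{k+1} + \eta)$ and hence \eqref{workloaddiff}.
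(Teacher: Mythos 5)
Your high-level plan is the right one and matches the paper's strategy: trace back to the last time $t_0$ at which a machine is idle or processes a $>2^k$ job, observe that the deficit $W_{\mathrm{M-SRPT}}^{[k]}-W_{\pi^*}^{[k]}$ cannot increase after $t_0$, and handle the moment $t_0$ by a case analysis that pushes the non-preemptive obstruction back a further $\eta$ time units. However, the reduction step and the final sub-case contain a genuine gap.

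When you subtract the two relations you also throw away $W_{\pi^*}^{[k]}(t_0)\geq 0$ and reduce the lemma to the absolute boundary estimate $W_{\mathrm{M-SRPT}}^{[k]}(t_0)\leq N(2^{k+1}+\eta)$. That estimate is false in the non-preemptive sub-case. Once the stuck task starts at $t'$, arrivals during $(t',t_0]$ are exogenous and can be arbitrarily large, so $W_{\mathrm{M-SRPT}}^{[k]}(t_0)$ is not bounded by any function of $N$, $k$ and $\eta$. Concretely, with $N=1$, $k=0$, $\eta=5$: a single non-preemptive task of size $5$ starts at time $0$, and $100$ unit-size jobs arrive at time $0.5$; then $W_{\mathrm{M-SRPT}}^{[0]}(5)=100$, far above $N(2^{1}+\eta)=7$, while the lemma still holds because $W_{\pi^*}^{[0]}(5)\approx 95.5$. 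Your justification, ``arrivals contribute at a rate controlled by the at most $N$ non-stuck machines,'' conflates service with arrivals; arrivals are not rate-limited by the machines.

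The fix is to keep $W_{\pi^*}^{[k]}$ inside the trace-back interval and track the \emph{difference} there, exactly as the paper does: over $(t',t_0]$ the M-SRPT side obeys $W^{[k]}_{\mathrm{M-SRPT}}(t_0)-W^{[k]}_{\mathrm{M-SRPT}}(t')=-(N-r)(t_0-t')+\Delta_k$ while the optimal side obeys $W^{[k]}_{\pi^*}(t_0)-W^{[k]}_{\pi^*}(t')\geq -N(t_0-t')+\Delta_k$; subtracting cancels the unbounded arrival term $\Delta_k$ and leaves a controlled drift of at most $r(t_0-t')\leq N\eta$. Your reduction destroys this cancellation before it can be used. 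You also do not account for the possibility that some $>2^k$ jobs (the $r$ jobs being served on stuck machines) descend into $\mathcal{C}^{[k]}$ during $(t',t_0]$, which is the paper's Case $2.2$ and requires the additional bound $|\mathcal{J}_s|\leq N-r$; your claim that ``no job descends into $\mathcal{C}^{[k]}$'' is only valid after $t_0$, not on the trace-back interval where $>2^k$ jobs are receiving service.
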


\begin{proof}
In the following of the proof, we always divide jobs into different classes according to the remaining workload under M-SRPT, we suppress reference to M-SRPT in the notation of $\mathcal{C}_{k}$. Without loss of generality we can assume that $W_{\mathrm{M-SRPT}}^{[k]}(t)>W_{\pi^{*}}^{[k]}(t)$, otherwise Lemma~\ref{workloadlemma} already holds. Since the remaining workload under M-SRPT is strictly larger than that under the optimal algorithm, we claim that there must exist time in $(0, t]$, at which either
\begin{itemize}
\item Idle machines exist under M-SRPT;
\item Jobs with remaining workload (under M-SRPT) larger than $2^{k}$ are processed.
\end{itemize}
Otherwise, all the machines will be processing jobs belonging to set $\mathcal{C}_{[k]}(t)$ before time $t$, while no jobs in higher classes, \ie, $\cup_{i>k}\mathcal{C}_{i}(t)$, will be switched into class $\mathcal{C}_{[k]}(t)$. 
Combining with the fact that the initial workload under Algorithm~\ref{schedulingalgo} and optimal algorithm are identical, \ie, $W_{\mathrm{M-SRPT}}^{[k]}(0)=W_{\pi^{*}}^{[k]}(0)$, we can see that $W_{\mathrm{M-SRPT}}^{[k]}(t)$ should be no more than $W_{\pi^{*}}^{[k]}(t)$ and  the contradiction appears. 

Now consider the following two collections of time:
\begin{align*}
\mathcal{T}_{k}^{(1)}=&\Big\{\bar{t}\in[0,t] \Big|\mbox{ At time } \bar{t}, \mbox{ at least one machine is idle under Algorithm~\ref{schedulingalgo}} \Big\},\\
\mathcal{T}_{k}^{(2)}=&\Big\{\bar{t}\in[0,t]\Big|\mbox{ At time } \bar{t}, \mbox{ there exists } i>k \mbox{ such that at least one machine is}\\
&\;\;\;\;\;\;\;\;\;\;\;\;\;\;\;\;\;\;\mbox{  processing jobs in } \mathcal{C}_{i} \mbox{ under Algorithm~\ref{schedulingalgo}}\Big\}.
\end{align*}
Let $\bar{t}^{(i)}_{k}=\max\{t|t\in \mathcal{T}_{k}^{(i)}\}\;(i\in \{1,2\})$ be the last time slot in $\mathcal{T}_{k}^{(i)}$, based on which we divide our proof into the following two cases.

\paragraph{Case 1: $\bar{t}^{(1)}_k\geq \bar{t}^{(2)}_k$.} From the definition of $\bar{t}^{(1)}_k$, it can be seen that during $(\bar{t}^{(1)}_k, t]$, no machines are idle or process jobs with remaining workload larger than $2^{k}$ under Algorithm~\ref{schedulingalgo}, while the increment in remaining workload incurred by newly arriving jobs are identical for Algorithm~\ref{schedulingalgo} and $\pi^{*}$. In addition, it is important to point out that $([n]\setminus \mathcal{C}_{[k]}(\bar{t}^{(1)}_k)) \cap \mathcal{C}_{[k]}(\tilde{t})=\emptyset$ for $\forall \tilde{t}\in (\bar{t}^{(1)}_k, t]$, \ie, no job will switch from a higher class to $\mathcal{C}_{[k]}$ during $(\bar{t}^{(1)}_k, t]$. Hence 
\begin{align*}
W_{\mathrm{M-SRPT}}^{[k]}(t)-W_{\pi^{*}}^{[k]}(t)\leq W_{\mathrm{M-SRPT}}^{[k]}(\bar{t}^{(1)}_k)-W_{\pi^{*}}^{[k]}(\bar{t}^{(1)}_k).
\end{align*}
It suffices to prove the workload difference inequality (\ref{workloaddiff}) for $t=\bar{t}^{(1)}_{k}$, \ie,
\begin{align}\label{sufficetoshowineq}
W_{\mathrm{M-SRPT}}^{[k]}(\bar{t}^{(1)}_{k})\leq W_{\pi^{*}}^{[k]}(\bar{t}^{(1)}_{k})+N \cdot (2^{k+1}+\eta).
\end{align}
Note that there exists some idle machines at time $t=\bar{t}^{(1)}_k$, which implies that under Algorithm~\ref{schedulingalgo}, the number of jobs alive must be less than $N$. Hence $W_{\mathrm{M-SRPT}}^{[k]}(\bar{t}^{(1)}_k) \leq (N-1)\cdot 2^{k}$ and (\ref{sufficetoshowineq}) holds.

\paragraph{Case 2 : $\bar{t}^{(1)}_k < \bar{t}^{(2)}_{k}$.} According to the definition of $\bar{t}^{(2)}_k$, there exist jobs with remaining workload larger than $2^k$ being processed at $\bar{t}^{(2)}_{k}$, we use $\hat{\mathcal{J}}(\bar{t}^{(2)}_{k}) \subseteq [n]\backslash \mathcal{C}_{[k]}(\bar{t}^{(2)}_{k})$ to denote the collection of such jobs.

When all the tasks are processed preemptively, we can obtain (\ref{workloaddiff}) directly, as we are able to conclude that there are at most $N-1$ jobs in $\mathcal{C}_{[k]}(\bar{t}^{(2)}_k)$. This is because that tasks are allowed to be preempted, and Algorithm~\ref{schedulingalgo} selects a job with remaining workload larger than $2^{k}$ at time $\bar{t}^{(2)}_k$. Consequently $W_{\mathrm{M-SRPT}}^{[k]}(\bar{t}^{(2)}_k)\leq n_{\mathrm{M-SRPT}}^{[k]}(\bar{t}^{(2)}_k)\cdot 2^{k}$, and for $\forall t>\bar{t}^{(2)}_k$,
\begin{align*}
W_{\mathrm{M-SRPT}}^{[k]}(t)-W_{\pi^{*}}^{[k]}(t)\leq W_{\mathrm{M-SRPT}}^{[k]}(\bar{t}^{(2)}_k)-W_{\pi^{*}}^{[k]}(\bar{t}^{(2)}_k)+[N-n_{\mathrm{M-SRPT}}^{[k]}(\bar{t}^{(2)}_k)]\cdot 2^{k}\leq N\cdot 2^{k},
\end{align*}
where the first inequality follows from the fact that no more than $N-n_{\mathrm{M-SRPT}}^{[k]}(\bar{t}^{(2)}_k)$ jobs switches from higher classes to $\mathcal{C}_{[k]}(t)$, as there are at most $N-n_{\mathrm{M-SRPT}}^{[k]}(\bar{t}^{(2)}_k)$ jobs with remaining workload larger than $2^{k}$ are being processed at time $\bar{t}^{(2)}_k$. Hence Lemma~\ref{workloadlemma} holds.

Now for the case when there exist non-preemptive tasks, arguments above does not work, because machines may be processing tasks with remaining workload larger than $2^{k}$ and hence $n_{\mathrm{M-SRPT}}^{[k]}(\bar{t}^{(2)}_{k})$ may be larger than $N$.
Let $ r\in [N]$ be the number of tasks that are being processed at time $\bar{t}^{(2)}_{k}$ and belongs to $[n]\setminus \mathcal{C}_{[k]}(\bar{t}^{(2)}_k)$, and $t_s\leq \bar{t}^{(2)}_k $ be the latest starting processing time of these tasks. We divide our analysis into the following two subcases:
\begin{itemize}
\item \textbf{Case $2.1$:} No jobs switch from set $[n]\backslash \mathcal{C}_{[k]}(t_{s})$ to $\mathcal{C}_{[k]}(\bar{t}^{(2)}_{k})$ under Algorithm~\ref{schedulingalgo}. We use $\Delta_k$ to represent the increment of $W^{[k]}_{\mathrm{M-SRPT}}$, incurred by the newly arriving jobs during time period $[t_s, \bar{t}^{(2)}_k]$. Then we have:
\begin{align}
\label{ineq1}
	W^{[k]}_{\mathrm{M-SRPT}}(\bar{t}^{(2)}_k)-W^{[k]}_{\mathrm{M-SRPT}}(t_s)=-(N-r)(\bar{t}^{(2)}_k-t_s)+\Delta_k.
\end{align}
On the other hand, $W^{[k]}_{\pi^{*}}$, the remaining workload of jobs in class $\mathcal{C}_{[k]}$ under the optimal algorithm $\pi^{*}$, decreases at a speed that is no more than $N$ units of workload per time slot, hence
\begin{align}
\label{ineq2}
	W^{[k]}_{\pi^{*}}(\bar{t}^{(2)}_k)-W^{[k]}_{\pi^{*}}(t_s)\geq -N\cdot (\bar{t}^{(2)}_k-t_s)+\Delta_k.
\end{align}
According to the definition of $\bar{t}^{(2)}_k$, no jobs with remaining workload larger than $2^k$ are processed in $(\bar{t}^{(2)}_{k}, t]$. Compared with time $\bar{t}^{(2)}_k$, there are at most $r$ jobs switch from $[n]\backslash \mathcal{C}_{[k]}(\bar{t}^{(2)}_k)$ to set $\mathcal{C}_{[k]}(\bar{t}^{(2)+}_k)$. Therefore
\begin{align}
\label{ineq333}
	W^{[k]}_{\mathrm{M-SRPT}}(\bar{t}^{(2)+}_k)-W^{[k]}_{\pi^{*}}(\bar{t}^{(2)+}_k) \leq W^{[k]}_{\mathrm{M-SRPT}}(\bar{t}^{(2)}_k)-W^{[k]}_{\pi^{*}}(\bar{t}^{(2)}_k)+r \cdot 2^{k}.
\end{align}
Combining inequalities (\ref{ineq1})---(\ref{ineq333}), we can obtain
\begin{align*}
W^{[k]}_{\mathrm{M-SRPT}}(t)-W^{[k]}_{\pi^{*}}(t) & \leq W^{[k]}_{\mathrm{M-SRPT}}(\bar{t}^{(2)+}_k)-W^{[k]}_{\pi^{*}}(\bar{t}^{(2)+}_k)\\
& \leq W^{[k]}_{\mathrm{M-SRPT}}(t_s)-W^{[k]}_{\pi^{*}}(t_s)+r\cdot [2^k+(\bar{t}^{(2)}_k-t_s)]\\
&\leq (N-1)\cdot 2^k+r\cdot (\bar{t}^{(2)}_k-t_s)\\
&\leq N \cdot (2^k+\eta).
\end{align*}
The third inequality above holds since at time $t_{s}$, Algorithm~\ref{schedulingalgo} is required to do job selection and a job with remaining workload larger than $2^{k}$ is selected. The last inequality follows from the fact that $\bar{t}^{(2)}_k-t_s \leq \eta$, as $t_s$ is the starting time of a non-preemptive task that is still alive at time $\bar{t}_{k}^{(2)}$.

\item \textbf{Case $2.2$:} There exist jobs switching from set $[n]\backslash \mathcal{C}_{[k]}(t_{s})$ to $\mathcal{C}_{[k]}(\bar{t}^{(2)}_k)$ under Algorithm~\ref{schedulingalgo}. We use $\mathcal{J}_{s}$ to denote the collection of such switching jobs. It is essential to bound the number of switching jobs, which will incur an increment of $|\mathcal{J}_{s}|\cdot 2^{k}$ in the remaining workload of class $\mathcal{C}_{[k]}$. A straightforward bound is $|\mathcal{J}_{s}|\leq N\cdot (\bar{t}^{(2)}_{k}-t_{s})\leq N\cdot \eta$, since at most $N$ jobs receive service at each time slot, and hence the number of switching jobs is no more than $N$. However, this bound is indeed loose, we argue that 
\begin{align}
|\mathcal{J}_{s}|\leq N-r.
\end{align}
Notice that after a job switches to class $\mathcal{C}_{[k]}$ during $[t_{s}, \bar{t}^{(2)}_k]$, it will only be preempted by jobs that are also in class $\mathcal{C}_{[k]}$, which is due to the SRPT rule. According to the precondition of this case, there are $r$ jobs in set $[n]\setminus \mathcal{C}_{[k]}$ that are continuously being processed during $[t_{s}, \bar{t}_{k}^{(2)}]$, hence at most $N-r$ units of resources per time slot are available for the remaining jobs. Note that resources that are allocated to jobs in $\mathcal{C}_{[k]}$ will not be utilized for switching a job from a higher class to $\mathcal{C}_{[k]}$. In addition, finished jobs will have no contribution to the total remaining workload $W^{[k]}_{\mathrm{M-SRPT}}(t)$. Hence $|\mathcal{J}_{s}|$ is no more than $N-r$.

Furthermore, we can derive the following conclusion:
\begin{align*}
&W^{[k]}_{\mathrm{M-SRPT}}(t)-W^{[k]}_{\pi^{*}}(t)\\
 \leq &W^{[k]}_{\mathrm{M-SRPT}}(\bar{t}^{(2)+}_{k})-W^{[k]}_{\pi^{*}}(\bar{t}^{(2)+}_{k})\\
\leq & W^{[k]}_{\mathrm{M-SRPT}}(\bar{t}^{(2)}_{k})-W^{[k]}_{\pi^{*}}(\bar{t}^{(2)}_{k})+r \cdot 2^{k} +	N \tag{job switching at $t^{(2)}$}\\
\leq & [W^{[k]}_{\mathrm{M-SRPT}}(t_{s})-W^{[k]}_{\pi^{*}}(t_{s})+(N-r)\cdot 2^{k}\\
&+ N\cdot (\bar{t}^{(2)}_{k}-t_{s})]+ r \cdot 2^{k}+N\tag{job switching during $[t_{s}, \bar{t}^{(2)}_k]$}\\
\leq & N \cdot (2^{k+1}+\eta+1).\tag{$\bar{t}^{(2)}_{k}-t_{s} \leq \eta$}
\end{align*}
\end{itemize}
The proof is complete.
\end{proof}

\subsection{Competitive ratio lower bound}
The following lower bounds mainly follow from the observation that, multiple-processor multitask scheduling problem generalizes the single-task job scheduling problem in both preemptive and non-preemptive settings.
\begin{proposition}\label{lowerboundlemma1}
For multiple-processor multitask scheduling problem with constant number of machines, there exists no algorithm that achieves a competitive ratio of $o(\log \alpha+\beta)$.	
\end{proposition}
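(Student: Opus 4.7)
The proposition asserts that M-SRPT's $O(\log\alpha+\beta)$ ratio is essentially optimal when $N=\Theta(1)$; I read the statement as a lower bound of $\Omega(\log\alpha+\beta)$ on the competitive ratio of every online algorithm (the ``$O$'' in the claim being a typo for ``$o$''). Following the hint in the text, the plan is to produce the two additive terms from two separate adversarial families, each of which is a degenerate MPMS instance that coincides with a classical single-task scheduling problem with a known lower bound, so that MPMS inherits the maximum of the two bounds; since $\Omega(\max\{\log\alpha,\beta\})=\Omega(\log\alpha+\beta)$, this is enough.

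For the $\log\alpha$ term I would restrict to MPMS instances in which every job consists of a single \emph{preemptive} task, so $\eta=0$ and $\beta=0$. These inputs are exactly preemptive online scheduling of single-task jobs on $N$ identical machines, and the Leonardi--Raz type lower bound cited in the introduction shows that any online (even randomized) algorithm suffers competitive ratio $\Omega(\log\min(\alpha,n/N))$ for this problem. Since $N$ is constant, I can choose $n$ arbitrarily large so that the minimum equals $\log\alpha$; the instance family used there is a geometric cascade of job sizes of length $\Theta(\log\alpha)$, in which any online scheduler is forced to accumulate $\Omega(\log\alpha)$ unfinished classes while OPT clears each class quickly. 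Because the reduction into MPMS is immediate (every job is a single preemptive task), the bound transfers with no loss.

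For the $\beta$ term I would restrict to MPMS instances where every job is a single \emph{non-preemptive} task, which specializes MPMS to non-preemptive online scheduling on $N$ identical machines with $\alpha=\beta$. The basic construction is the standard one: release $N$ identical ``big'' non-preemptive jobs of size $\eta$ at time $0$ and then, after a tiny delay, adaptively decide whether to release a burst of unit-size jobs. A work-conserving scheduler is committed to the big jobs and every small job in the burst accumulates flow time $\Omega(\eta)$; a scheduler that idles instead pays $\Omega(\eta)$ extra on the companion instance in which no burst ever arrives. The main obstacle, and the step where I expect to spend the bulk of the argument, is that a single round of this adversary only produces the weaker bound $\Omega(\sqrt\beta)$ because the algorithm can hedge by delaying a small but nontrivial amount.

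To close this gap to the tight $\Omega(\beta)$ I would chain together $\Theta(\beta)$ independent rounds of the construction, each spaced by enough idle time that OPT finishes everything released so far, and in each round pick the burst/no-burst branch adaptively based on the scheduler's commitments within that round. Against any online strategy, work-conserving or not, an averaging argument across rounds shows that a constant fraction of rounds must fall on the ``bad'' branch for the algorithm while OPT pays only $O(1)$ per round, so the per-round blow-up compounds to the claimed $\Omega(\beta)$ factor. Combining this with the $\Omega(\log\alpha)$ preemptive family from the previous paragraph and taking the worse of the two instance families yields the lower bound $\Omega(\log\alpha+\beta)$ for constant $N$, which is what the proposition requires.
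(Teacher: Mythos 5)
Your high-level plan --- specializing MPMS to preemptive and to non-preemptive single-task instances and inheriting the maximum of the two classical lower bounds --- is exactly the paper's one-line reduction, and the $\Omega(\log\alpha)$ half via a Leonardi--Raz geometric-cascade family is sound. The problem is the $\beta$ half: the chaining you propose cannot amplify a per-round competitive ratio. Once the $\Theta(\beta)$ rounds are separated so that the optimal offline schedule clears each round before the next begins, the costs of both ALG and OPT are additive over rounds, so if a bad round contributes ALG cost $A$ and OPT cost $B$ with $A/B = \Theta(\sqrt\beta)$, then $T$ rounds still give ratio $(\Theta(T)\cdot A)/(\Theta(T)\cdot B) = \Theta(\sqrt\beta)$. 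An averaging argument only shows a constant fraction of rounds are bad, which leaves the aggregate ratio unchanged; it cannot convert $\Omega(\sqrt\beta)$ into $\Omega(\beta)$.

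The real issue is that the single-round construction you start from is the wrong one. Releasing the $k$ small jobs as a single batch forces OPT, too, to pay $\Theta(k^2)$ for the mutual delays among the small jobs, which is precisely why the ratio peaks at $k \approx \sqrt{P}$ with value $\Theta(\sqrt P)$. To get $\Omega(\beta)$ in one round, release a single non-preemptive job of size $P$ at time $0$ and then, once the (deterministic) algorithm commits to it at some time $T$ (the adaptive adversary can observe this), release $P$ unit jobs \emph{one per unit time} starting at $T+\epsilon$. OPT serves the unit jobs as they arrive, each with $O(1)$ flow, and the big job last, for total $\Theta(P+T)$; the committed algorithm makes every unit job wait $\Theta(P)$, for total $\Theta(P^2 + T)$. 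For every commitment time $T$ the ratio is $\Omega(P)=\Omega(\beta)$, and the adversary may omit the burst to punish an algorithm that never commits. This is the construction the paper's terse proof is alluding to when it says the non-preemptive specialization makes $\Theta(\beta)$ the best possible for a constant number of machines.
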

\begin{proof}
When $p_{\min}=\eta=1$, the problem degenerates to preemptive setting and no algorithm can achieve a competitive ratio of $o(\log \alpha)$ \cite{leonardi2007approximating}. When $\eta=p_{\max}$, the problem degenerates to the non-preemptive setting and $O(\beta)$ is the best possible competitive ratio if the number of machines is constant \cite{bunde2002approximating}. The proof is complete.
\end{proof}

\begin{proposition}\label{lowerboundlemma2}
For multiple-processor multitask scheduling problem, the competitive ratio of any work-conserving algorithms have an competitive ratio of $\Omega(\log \alpha+ \beta^{1-\varepsilon})$ for $\forall \varepsilon>0$.	
\end{proposition}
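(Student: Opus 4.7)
The plan is to derive each of the two additive terms from its own degenerate sub-family of inputs and then combine them via the trivial inequality $\max(A,B)\geq (A+B)/2$, which reduces the proposition to two standard lower bounds inherited from special cases of the multitask model.

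For the $\Omega(\log \alpha)$ contribution I would restrict to inputs in which every job consists of a single preemptive task, so that $\eta$ may be taken as small as $p_{\min}$ and the multitask problem collapses to the classical preemptive minimization of total flow time on $N$ identical parallel machines. The standard Leonardi--Raz lower bound for that problem is $\Omega(\log P)$, where $P$ is the ratio of the largest to smallest job size, and it holds for every online algorithm (work-conserving or not); this specializes directly to $\Omega(\log \alpha)$ on our restricted family.

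For the $\Omega(\beta^{1-\varepsilon})$ contribution I would restrict to inputs in which every job consists of a single non-preemptive task, so that $\eta=p_{\max}$ and $\beta=p_{\max}/p_{\min}$, and the multitask problem collapses to online non-preemptive flow-time minimization on $N$ machines. A single-machine Kellerer--Tautenhahn--Woeginger-style adversary releases one long task of length $L$ at time $0$, forcing every work-conserving scheduler to commit a machine to it; then a burst of $k$ unit tasks arrives and accumulates $\Omega(kL)$ total flow time under the work-conserving policy, whereas the offline optimum holds the long task and processes the burst first, paying only $O(k+L)$. Balancing $k$ and $L$ against the length ratio $P=L$ yields the $\Omega(P^{1-\varepsilon})$ ratio; to lift to $N$ machines, I release $N$ simultaneous long tasks at time $0$, so that work-conservation forces every machine to be captured, and then release $Nk$ unit tasks, which preserves the ratio.

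The main obstacle is the parallel-machine lifting of the non-preemptive adversary: a naive work-conserving algorithm might try to dodge the trap by spreading the subsequent burst of short tasks across machines that are not yet committed, so one must verify that the $N$ simultaneous long tasks released at time zero do force work-conservation to capture all $N$ machines, and that the arithmetic yielding the $(1-\varepsilon)$ exponent on one machine is not degraded by the parallel lifting. Once this is carried out, the two lower bounds combine via $\max(A,B)\geq (A+B)/2$ to give $\mathcal{CR}_{\mathcal{A}}=\Omega(\log \alpha+\beta^{1-\varepsilon})$ for every work-conserving algorithm $\mathcal{A}$, as claimed.
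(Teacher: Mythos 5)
Your approach mirrors the paper's, which is itself a one-line reduction: both additive terms are obtained by degenerating to the preemptive and the non-preemptive single-task special cases ($\eta$ small versus $\eta=p_{\max}$) and invoking known lower bounds, exactly as the paper does for Propositions~\ref{lowerboundlemma1} and \ref{lowerboundlemma2}.

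One caveat in the explicit arithmetic you supply for the $\beta$ term: the single-burst construction you write out --- one non-preemptive task of length $L$ at time $0$, then $k$ unit tasks released together, with $k$ and $L$ balanced --- gives a ratio of only $kL/(k^{2}+L)$, maximized at $k=\sqrt{L}$, hence $\Omega(\sqrt{\beta})$, not $\Omega(\beta^{1-\varepsilon})$. To actually reach $\Omega(\beta^{1-\varepsilon})$ you need either the full iterated Kellerer--Tautenhahn--Woeginger-style recursion you allude to, or, more simply, stagger the short jobs: release $L$ unit tasks at times $1,2,\dots,L$. A work-conserving scheduler is locked onto the long task until time $L$, so each unit job waits $\Theta(L)$, giving total cost $\Theta(L^{2})$, while the offline schedule idles briefly, serves the unit jobs as they arrive, and runs the long task last, for total cost $\Theta(L)$; the ratio is $\Omega(L)=\Omega(\beta)$, which is even stronger than the stated bound. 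The $N$-machine lift (release $N$ long non-preemptive tasks at time zero so that work-conservation captures every machine, then the staggered or burst short tasks in multiples of $N$) goes through as you anticipate.
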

\begin{proof}
The reasoning is similar as the proof of Proposition~\ref{lowerboundlemma1}, since work-conserving algorithms cannot achieve a competitive ratio of $o(\beta^{1-\varepsilon})$ in the non-preemptive single-task job scheduling \cite{bunde2002approximating}.
\end{proof}

\section{Asymptotic Optimality of Modified SRPT with Poisson Arrival}\label{heavytraffic}
In this section we show that under mild probabilistic assumptions, Algorithm~\ref{schedulingalgo} is asymptotic optimal for minimizing the total response time in the heavy traffic regime. The result is formally stated as following.

\begin{theorem}\label{heavytrafficthm}
Let $F^{\mathrm{M-SRPT}}_{\rho}$ and $F^{\pi^{*}}_{\rho}$ be the response time incurred by Algorithm~\ref{schedulingalgo} and optimal algorithm respectively, when the traffic intensity is equal to $\rho$. In an M/GI/$N$ with finite job size distribution, Algorithm~\ref{schedulingalgo} is heavy traffic optimal, \ie,
\begin{align}\label{taskthreshold}
\lim\nolimits_{\rho \rightarrow 1}\frac{\mathbbm{E}[F^{\mathrm{M-SRPT}}_{\rho}]}{\mathbbm{E}[F^{\pi^{*}}_{\rho}]}=1.
\end{align}

\end{theorem}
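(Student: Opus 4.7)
The plan is to split the heavy-traffic optimality into two uniform estimates and then let $\rho \to 1$: (a) $\mathbbm{E}[F^{\mathrm{M-SRPT}}_\rho] \leq \mathbbm{E}[F^{\pi^*}_\rho] + C$ for some constant $C$ independent of $\rho$, and (b) $\mathbbm{E}[F^{\pi^*}_\rho] \to \infty$ as $\rho \to 1$. Combining the two immediately yields $\mathbbm{E}[F^{\mathrm{M-SRPT}}_\rho]/\mathbbm{E}[F^{\pi^*}_\rho] \to 1$. The finite-support hypothesis is what powers (a): it fixes $\alpha$, $p_{\max}$, and $\eta$ as absolute constants, so the deterministic slack $N(2^{k+1}+\eta)$ appearing in Lemma~\ref{workloadlemma} is $\rho$-independent.

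For (a), I would carry out a tagged-job analysis in the spirit of Grosof \textit{et al}. For a tagged arrival of size $x$ under M-SRPT, its response time decomposes into (i) the processing of work whose remaining size is at most $x$ (this is the work that actually delays it under the SRPT priority rule), plus (ii) the residual service of at most $N-1$ larger in-progress jobs, plus (iii) an $O(\eta)$ delay per in-service non-preemptive task. Applying Lemma~\ref{workloadlemma} at priority level $x$ bounds the relevant-work term in (i) by the corresponding quantity in a single-server SRPT-$N$ system (M/GI/1 with service rate $N$) fed by the same arrivals, up to an additive $O(N(p_{\max}+\eta\log\alpha))$; the terms (ii) and (iii) together contribute at most $N(p_{\max}+\eta)=O(1)$. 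Taking steady-state expectations via coupling and busy-period regeneration from Section~\ref{modelpre} then yields $\mathbbm{E}[F^{\mathrm{M-SRPT}}_\rho] \leq \mathbbm{E}[F^{\mathrm{SRPT}\text{-}N}_\rho] + O(1)$.

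For (b), I would invoke the simulation principle: any work-conserving M/GI/$N$ schedule can be emulated on a single server of rate $N$ by allocating $1/N$ of the server's capacity to each currently-active job, so any response-time vector achievable under some M/GI/$N$ policy is also achievable on this single server. Consequently $\mathbbm{E}[F^{\pi^*}_\rho] \geq \mathbbm{E}[F^{\mathrm{SRPT}\text{-}N}_\rho]$, using SRPT optimality in M/GI/1 when remaining workloads are known. A standard heavy-traffic argument then gives $\mathbbm{E}[F^{\mathrm{SRPT}\text{-}N}_\rho] \to \infty$ as $\rho \to 1$, since the mean workload in a work-conserving M/GI/1 already diverges at rate $\Theta(1/(1-\rho))$. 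Chaining (a) and (b) produces $\mathbbm{E}[F^{\mathrm{M-SRPT}}_\rho]/\mathbbm{E}[F^{\pi^*}_\rho] \leq 1 + O(1)/\mathbbm{E}[F^{\mathrm{SRPT}\text{-}N}_\rho] \to 1$.

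The main obstacle will be executing the tagged-job bookkeeping cleanly in the presence of both multiple servers and non-preemptive tasks simultaneously. A tagged small job can be blocked by up to $N$ concurrently in-service tasks of sizes as large as $\eta$, and one must show this blocking contributes only $O(1)$, rather than something scaling with $1/(1-\rho)$, to its response time. The slack term $N(2^{k+1}+\eta)$ of Lemma~\ref{workloadlemma} is precisely calibrated for this purpose and will be the workhorse. A secondary difficulty is the passage from the sample-path workload inequality to stationary expectations of response times; this is expected to be handled by coupling M-SRPT and $\pi^*$ on the same arrival realization and integrating over a busy period, exploiting $\mathbbm{E}[\mathsf{B}(w)]=\mathbbm{E}[w]/(1-\rho)$ from the preliminaries.
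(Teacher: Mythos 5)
Your high-level architecture is the right one and matches the paper's: bound the excess $\mathbbm{E}[F^{\mathrm{M-SRPT}}_\rho]-\mathbbm{E}[F^{\mathrm{SRPT}_{1,N}}_\rho]$, lower-bound $\mathbbm{E}[F^{\pi^*}_\rho]$ by the speed-$N$ single-server SRPT system, and show the excess is a lower-order term. The lower bound via single-server emulation and $\mathbbm{E}[F^{\pi^*}_\rho]\ge\mathbbm{E}[F^{\mathrm{SRPT}_{1,N}}_\rho]$ is exactly what the paper does (and for finite-support distributions the Lin--Wierman--Zwart characterization gives $\mathbbm{E}[F^{\mathrm{SRPT}_{1,1}}_\rho]=\Theta(1/(1-\rho))$). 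However, your step (a) is quantitatively wrong in a way that would undermine the write-up even though the final conclusion is salvageable.

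The gap: you claim $\mathbbm{E}[F^{\mathrm{M-SRPT}}_\rho]\le\mathbbm{E}[F^{\pi^*}_\rho]+C$ with a $\rho$-independent constant $C$, reasoning that the workload slack $N(2^{k+1}+\eta)$ from Lemma~\ref{workloadlemma} is $O(1)$ when $p_{\max}$ and $\eta$ are bounded. That is a bound on extra \emph{workload}, not extra \emph{delay}. In the tagged-job argument, the excess workload at the tagged job's priority level must be pushed through a sub-busy-period of the single-server system, and in M/GI/1 a busy period started by a workload $w$ among jobs of size $\le x$ has expected length $\mathbbm{E}[w]/(1-\rho(x))$. For a tagged job of size $x$ near $p_{\max}$, $1/(1-\rho(x))$ is of order $1/(1-\rho)$ and blows up. Averaging over $x$ gives, via the identity $\int_0^\infty x f(x)/(1-\rho(x))\,dx=\tfrac{1}{\lambda}\log\tfrac{1}{1-\rho}$ and the paper's (tight) bound in (\ref{ineqbound}), an excess of $\Theta(\log\tfrac{1}{1-\rho})$, not $O(1)$. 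This is precisely the form of the bound in Theorem~\ref{lemma1}. The correct closing step is therefore not ``constant over divergent'' but ``$\log(1/(1-\rho))$ over $\Theta(1/(1-\rho))$,'' which still tends to $0$; you get the theorem, but the intermediate claim you would need to certify is different.

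A secondary omission: the paper's Theorem~\ref{lemma1} excess has a $1/\xi$ factor from the job-size lower bound, and ``finite support'' does not imply sizes are bounded away from zero. The paper devotes the $\chi$-Modified SRPT construction, the artificial-task input process $\mathsf{A}^{\dag}$, Proposition~\ref{busymodi}, and Lemma~\ref{singlemachinebound} to removing this dependence (replacing the $\tfrac{\eta}{\xi}\log\tfrac{1}{1-\rho}$ term with something of order $\sqrt{1-\rho}\,\log\tfrac{1}{1-\rho}$ by choosing $\chi\sim\sqrt{\eta(1-\rho)}$). Your proposal would need either to restrict to distributions bounded below by a constant, or to argue directly that $\int_0^\infty f(x)/(1-\rho(x))\,dx=O(\log\tfrac{1}{1-\rho})$ with a distribution-dependent but $\rho$-uniform constant for fixed finite-support $f$, neither of which is spelled out.
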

%
The probabilistic assumptions here are with respect to the distribution of job size, \ie, the total workload of tasks. For the processing time of a single task, the only assumption we have is the upper bound $\eta$, which is finite since the job size distribution has finite support. It can be seen that the optimality result in~\cite{grosof2018srpt} corresponds to a special case of Theorem \ref{heavytrafficthm}.


\subsection{Average response time bound}\label{uplowbound}
We first remark that Lemma~\ref{workloadlemma} can be extended to any non-negative number $y\geq 0$.
\begin{lemma}\label{multaskabr} The difference of the amount of remaining workload under Algorithm~\ref{schedulingalgo} and that under $\mathrm{SRPT}$ algorithm in a single server system with speed $N$, is upper bounded by
\begin{align*}
\mathsf{W}^{\mathrm{M-SRPT}}_{\leq y}(t)-\mathsf{W}^{\mathrm{SRPT}_{1,N}}_{\leq y}(t)\leq N\cdot (2y+\eta), \forall y,t \geq 0,	
\end{align*}
where $\mathrm{SRPT}_{k,\ell}$ denotes the $\mathrm{SRPT}$ algorithm in a system with $k$ servers, and each server has a speed of $\ell$.
\end{lemma}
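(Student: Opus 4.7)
The plan is to mirror the proof of Lemma~\ref{workloadlemma} almost verbatim, replacing the dyadic threshold $2^{k}$ by the arbitrary cutoff $y$, and replacing the optimal algorithm $\pi^{*}$ by the single-server benchmark $\mathrm{SRPT}_{1,N}$. The only fact about $\pi^{*}$ actually used in Lemma~\ref{workloadlemma} is that $W^{\pi^{*}}_{\leq y}$ decreases at a rate of at most $N$ work units per unit time; the same holds trivially for a single server of speed $N$, so the structural argument transfers without change.

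Concretely, I would assume (WLOG) that $W^{\mathrm{M\text{-}SRPT}}_{\leq y}(t)>W^{\mathrm{SRPT}_{1,N}}_{\leq y}(t)$, because otherwise the claim is immediate. As in the proof of Lemma~\ref{workloadlemma}, this strict inequality forces the existence of some time in $(0,t]$ at which M-SRPT either (i) leaves a machine idle, or (ii) is processing a job whose remaining workload exceeds $y$; otherwise all $N$ machines would have been continuously processing jobs of size $\leq y$ with no class switches, contradicting the strict inequality. Let $\bar t^{(1)}, \bar t^{(2)}\in[0,t]$ denote the last such times of type (i) and (ii) respectively. In the first case $\bar t^{(1)}\geq \bar t^{(2)}$, the interval $(\bar t^{(1)},t]$ is identical in behavior to the corresponding interval in Lemma~\ref{workloadlemma}, and at $\bar t^{(1)}$ the presence of an idle machine implies fewer than $N$ alive jobs, hence $W^{\mathrm{M\text{-}SRPT}}_{\leq y}(\bar t^{(1)})\leq (N-1)y$, which already yields (\ref{workloaddiff}) with the $y$-threshold.

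The more delicate second case $\bar t^{(1)}<\bar t^{(2)}$ is handled exactly as in Lemma~\ref{workloadlemma}'s Case~2: let $r$ be the number of tasks of jobs with remaining workload $>y$ being processed at time $\bar t^{(2)}$, and let $t_s\leq \bar t^{(2)}$ be the latest start time among these tasks, so that $\bar t^{(2)}-t_s\leq \eta$. At $t_s$ the M-SRPT rule chose to start a task of a $>y$ job, which is only possible if the number of alive jobs of size $\leq y$ is at most $N-1$, giving $W^{\mathrm{M\text{-}SRPT}}_{\leq y}(t_s)\leq (N-1)y$. On $[t_s,\bar t^{(2)}]$, $W^{\mathrm{M\text{-}SRPT}}_{\leq y}$ decreases at rate at least $N-r$ (since $r$ machines are locked on $>y$ tasks), while $W^{\mathrm{SRPT}_{1,N}}_{\leq y}$ decreases at rate at most $N$; the arrivals are the same. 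Finally, at $\bar t^{(2)}$ up to $r$ jobs may drop from $>y$ down to $\leq y$ as their non-preemptive tasks finish, contributing at most $r\cdot y$ to $W^{\mathrm{M\text{-}SRPT}}_{\leq y}$. Adding these contributions yields the advertised bound $N(2y+\eta)$.

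The main obstacle, as in Lemma~\ref{workloadlemma}, is the sub-case where additional jobs switch from $>y$ into $\leq y$ during $[t_s,\bar t^{(2)}]$: a naive count gives $|\mathcal{J}_s|\leq N$, which is too loose and would cost an extra $Ny$. The refined bound $|\mathcal{J}_s|\leq N-r$ is what makes the numerics work; it follows because any switched job can subsequently only be preempted by other $\leq y$ jobs (by the SRPT priority), and the $r$ machines busy with the non-preemptive $>y$ tasks contribute no switching, so effectively only $N-r$ machines can cause downward switches. With this bound in place the same chain of inequalities as in Case~2.2 of Lemma~\ref{workloadlemma} closes the argument. No new ideas beyond the proof of Lemma~\ref{workloadlemma} are required; the proof is really a statement that the dyadic partition there was an artifact of bookkeeping rather than an essential feature.
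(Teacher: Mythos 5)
Your proposal is correct and follows exactly the paper's intended route: the paper's own proof of this lemma is the one-line remark ``The proof is identical to that of Lemma~\ref{workloadlemma},'' and your write-up makes explicit why the dyadic threshold $2^k$ and the benchmark $\pi^*$ can be swapped for an arbitrary cutoff $y$ and for $\mathrm{SRPT}_{1,N}$ without affecting any step. In particular, your observation that only the $N$-per-unit-time depletion rate of the benchmark's below-threshold workload is used, together with the refined switching bound $|\mathcal{J}_s|\leq N-r$, is precisely the unspoken content of the paper's one-liner.
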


\begin{proof}
The proof is identical to that of Lemma~\ref{workloadlemma}.
\end{proof}

Our main goal is to derive the following analytical upper bound on $\mathbbm{E}[F_{\rho}^{\mathrm{M-SRPT}}]$. 
\begin{theorem}\label{lemma1}
The average response time under Algorithm~\ref{schedulingalgo} satisfies that
\begin{align}\label{unrefine1}
\mathbbm{E}{[F_{\rho}^{\mathrm{M-SRPT}}]}\leq \mathbbm{E}{[F_{\rho}^{\mathrm{SRPT}_{1,N}}]}+O\Big( \log \frac{1}{1-\rho}\Big).
\end{align}
\end{theorem}

\begin{proof}
Similar as the techniques in~\cite{grosof2018srpt,schrage1966queue}, we relate the response time of the tagged job with an appropriate busy period. 

Consider a tagged job with workload $x$, arriving time $r_{x}$ and completion time $C_{x}$. The computing resources of $N$ servers must be spent on the following types of job during $[r_{x}, C_{x}]$:
\begin{enumerate}
\item The system may be processing \emph{jobs with remaining workload larger than $x$, or some machines are idle, while the tagged job is in service}, because the number of jobs alive is smaller than $N$. We use $\mathsf{W}_{\mathrm{waste}}(r_{x})$ to represent the amount of such resources, then
\begin{align}\label{wastebound}
\mathsf{W}_{\mathrm{waste}}(r_{x})\leq (N-1)\cdot x,
\end{align}
which is indeed the same as Lemma $5.1$ in~\cite{grosof2018srpt}. The reason is straightforward---as the tagged job must be in service, hence the number of such time slots should not exceed $x$, and thus (\ref{wastebound}) holds.
\item The system may be dealing with \emph{jobs with remaining workload no more than $x$ at time} $r_{x}$, the amount of resources spent on this class is no more than $\mathsf{W}^{\mathrm{M-SRPT}}_{\leq x}(r_{x})$. Here for any algorithm $\pi$, we use $\mathsf{W}^{\pi}_{\leq x}(t)$ to denote the total workload of jobs with remaining workload no more than $x$ at time $t$. 
\item The system may be dealing with \emph{jobs that have a remaining workload larger than $x$ at time $t=r_{x}$, while the tagged job is not in service.} This is possible and happens only if the system is processing non-preemptive tasks, which belong to a job with total remaining workload larger than $x$. The tasks are in service before the arrival of the tagged job, and the non-preemptive rule allows the task to be served from time $r_{x}$ onwards.

Let $\mathsf{W}_{\mathrm{non-pm}}(r_{x})$ denote the total units of computing resources spent on this class of jobs during $[r_{x}, C_{x}]$. Our main argument for this class of jobs is,
\begin{align}\label{taskbound}
\mathsf{W}_{\mathrm{non-pm}}(r_{x})\leq (N^{2}+N)\cdot \eta+N\cdot x.
\end{align}
To see the correctness of inequality (\ref{taskbound}), we consider time intervals $[r_{x},r_{x}+\eta]$ and $(r_{x}+\eta, C_{x}]$ separately.
\begin{itemize}
\item Note that there are $N\cdot \eta$ computing resources during time $[r_{x}, r_{x}+\eta]$ in total, hence it is obvious to see that the amount of resources spent on this collection of jobs during $[r_{x}, r_{x}+\eta]$ cannot exceed $N\cdot \eta$.
\item We next show that in time interval $(r_{x}+\eta, C_{x}]$, the total amount of computing resources spent on such jobs is no more than $N^{2}\cdot \eta+N\cdot x$. Consider the following two types of jobs:
\begin{itemize}
\item Jobs that have a remaining workload larger than $x$ at time $t=r_{x}+\eta$. Note that jobs of this class will be processed after time $t=r_{x}+\eta$ only if the tagged job is in service, hence the amount of resources spending on such jobs are already taken into account in the first class above, \ie, the quantity $\mathsf{W}_{\mathrm{waste}}(r_{x})$, and we can ignore this subclass.
\item For the collection of jobs with remaining workload no more than $x$ at time $t=r_{x}+\eta$, it is clear to see that the remaining workload of such jobs at time $t=r_{x}$ must be no more than $x+N\cdot \eta$ (different tasks within the same job might be processed in parallel). Since there are at most $N$ such jobs in total, we can conclude that the remaining workload of jobs in this subclass must be no more than $N\cdot (x+N\cdot \eta)=N\cdot x+N^{2}\cdot \eta$, which implies that $\mathsf{W}_{\mathrm{non-pm}}(r_{x})\leq N\cdot x+N^{2}\cdot \eta+N\eta $ and (\ref{taskbound}) holds. 

\end{itemize}
\end{itemize}

\item \emph{Tagged job itself}. The amount of resources is equal to $x$, the size of the tagged job.
\item \emph{Newly arriving jobs during $[r_{x},C_{x}]$ with size no more than $x$}.
\end{enumerate}

Hence $\mathsf{T}^{\mathrm{M-SRPT}}_{x}$, the response time of the tagged job, is no more than the length of a busy period of a single server system with speed $N$, which starts at time $r_{x}$ and has a initial workload of
\begin{align*}
\mathsf{W}_{\mathrm{waste}}(r_{x})+\mathsf{W}_{\mathrm{non-pm}}(r_{x})+\mathsf{W}^{\mathrm{M-SRPT}}_{\leq x}(r_{x})+x.
\end{align*}
Combining with the aforementioned analysis, formally we have
\begin{align*}
\mathsf{T}^{\mathrm{M-SRPT}}_{x}&\leq_{st}\mathsf{B}^{(\rho_{x})}\Big(\mathsf{W}_{\mathrm{waste}}(r_{x})+\mathsf{W}_{\mathrm{non-pm}}(r_{x})+\mathsf{W}^{\mathrm{M-SRPT}}_{\leq x}(r_{x})+x\Big)\\
&\overset{(a)}{=} \mathsf{B}^{(\rho_{x})}\Big(\mathsf{W}_{\mathrm{waste}}(r_{x})+\mathsf{W}_{\mathrm{non-pm}}(r_{x})+x\Big)+ \mathsf{B}^{(\rho_{x})}\Big(\mathsf{W}^{\mathrm{M-SRPT}}_{\leq x}(r_{x})\Big)\\
& \overset{(b)}{\leq} \mathsf{B}^{(\rho_{x})}\Big(N^{2}\cdot \eta+(2x+\eta)\Big)+\mathsf{B}^{(\rho_{x})}\Big(\mathsf{W}^{\mathrm{M-SRPT}}_{\leq x}(r_{x})\Big)\\
&\overset{(c)}{\leq} \underbrace{\mathsf{B}^{(\rho_{x})}\Big(3N^{2}\cdot (\eta+x)\Big)}_{\Sigma_{1}}+\underbrace{\mathsf{B}^{(\rho_{x})}\Big(\mathsf{W}^{\mathrm{SRPT}_{1, N}}_{\leq x}(r_{x})\Big)}_{\Sigma_{2}},
\end{align*}
where $(a)$ follows from the additivity of busy period; In $(b)$ we utilize the upper bounds established in (\ref{wastebound}) and (\ref{taskbound}) and $(c)$ follows from Lemma~\ref{multaskabr}.

Note that the average response time under SRPT in a single server system is lower bounded as
\begin{align}
\mathbbm{E}{[F_{\rho}^{\mathrm{SRPT}_{1, N}}]}\geq  \mathbbm{E}_{x,r_{x}}{[\mathsf{B}^{(\rho(x))}(W^{\mathrm{SRPT}_{1, N}}_{\leq x}(r_{x}))]}= \mathbbm{E}_{x,r_{x}}{[\Sigma_{2}]},
\end{align}
where the first equality holds due to the Poission Arrivals See Time Average (PASTA) property~\cite{wolff1982poisson}. 
Note that
\begin{align}\label{ineqbound0}
\mathbbm{E}[\Sigma_{1}]&= O\Big(\mathbbm{E}\Big(\mathsf{B}^{(\rho(x))}(\eta+x)\Big)\Big)= O\Big(\mathbbm{E}\Big[\frac{\eta+x}{1-\rho(x)}\Big] \Big)\notag\\
&=O\Big(\log \frac{1}{1-\rho}\Big)+\mathbbm{E}[\eta]\cdot O\Big(\int_{0}^{\infty}\frac{{ f(x)}}{1-\rho(x)}dx \Big).
\end{align}
In addition,
\begin{align}\label{ineqbound}
\int_{0}^{\infty}\frac{f(x)}{1-\rho(x)}dx =\int_{0}^{\xi}\frac{f(x)}{1-\rho(x)}dx + \int_{\xi}^{\infty}\frac{f(x)}{1-\rho(x)}dx\leq \frac{1}{1-\rho(\xi)} +\frac{1}{\xi}\cdot \int_{\xi}^{\infty}\frac{xf(x)}{1-\rho(x)}dx,
\end{align}
where $\xi$ satisfies that $\rho(\xi) = \rho/2$. Note that
\begin{align*}
\rho(\xi)=\lambda \cdot \int_{0}^{\xi}{tf(t)dt}\leq \lambda \cdot\xi,
\end{align*}
hence we have $\xi \geq \mathbbm{E}[p_{i}]/2$. Then the right hand side of (\ref{ineqbound0}) can be further bounded as
\begin{align*}
\int_{0}^{\infty}\frac{f(x)}{1-\rho(x)}dx  \leq 2+ \frac{2}{\mathbbm{E}[p_{i}]} \cdot  \int_{0}^{\infty}\frac{xf(x)}{1-\rho(x)}dx= 2+ \frac{2}{\mathbbm{E}[p_{i}]} \cdot \log \frac{1}{1-\rho}.
\end{align*}
Therefore for any input instance, the average response time under Modified-SRPT, is no more than,
\begin{align}\label{unrefine}
\mathbbm{E}{[F_{\rho}^{\mathrm{M-SRPT}}]}&=\mathbbm{E}_{x,r_{x}}{[\mathsf{T}^{\mathrm{M-SRPT}}_{x}]}=\mathbbm{E}_{x,r_{x}}[\Sigma_{1}]+\mathbbm{E}_{x,r_{x}}[\Sigma_{2}]\notag\\
&\leq \mathbbm{E}{[F_{\rho}^{\mathrm{SRPT}_{1, N}}]}+O\Big( \log \frac{1}{1-\rho}\Big).
\end{align}
The proof is complete.
\end{proof}

\subsection{Existing lower bound for M/GI/1}

To start with, we consider the benchmark system consisting of a single machine with speed $N$, where all the tasks can be allowed to be served in preemptive fashion, \ie, the concept of task is indeed unnecessary in this setting. It is clear to see that the mean response time under optimal algorithm for this single machine system can be performed as a valid lower bound for the multitask problem, \ie, 
\begin{align}\label{ineqq1}
\mathbbm{E}{[F^{\pi^{*}}_{\rho}]}\geq \mathbbm{E}{[F^{\mathrm{SRPT}_{1,N}}_{\rho}]}.
\end{align}
It is well-known that SRPT minimizes the average response time in single server system.
For the case when job size distribution has finite support, Lin et al.~\cite{lin2011heavy} derived the heavy traffic growth rate of the average response time under SRPT~\cite{lin2011heavy}. 

\begin{lemma}[\cite{lin2011heavy}]\label{mg1bound}
In an $M/GI/1$ with finite job size distribution, the average response time under SRPT is in the order of
\begin{align*}
\mathbbm{E}[F^{\mathrm{SRPT}_{1,1}}_{\rho}]=\Theta\Big(\frac{1}{1-\rho}\Big).
\end{align*}
\end{lemma}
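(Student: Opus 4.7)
The plan is to apply the classical Schrage--Miller conditional response-time formula for SRPT in $M/GI/1$. Letting $f$ be supported in $[0,p_{\max}]$ and writing $\rho(x)=\lambda\int_{0}^{x} t f(t)\,dt$, the formula reads
\begin{align*}
\mathbbm{E}[T(x)] \;=\; \frac{\lambda\bigl(\int_{0}^{x} t^{2} f(t)\,dt \;+\; x^{2}(1-F(x))\bigr)}{2(1-\rho(x))^{2}} \;+\; \int_{0}^{x}\frac{dt}{1-\rho(t)},
\end{align*}
so $\mathbbm{E}[F_{\rho}^{\mathrm{SRPT}_{1,1}}] = \int_{0}^{p_{\max}} \mathbbm{E}[T(x)]\, f(x)\, dx$. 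Everything then reduces to sharp asymptotics of these two integrals as $\rho \uparrow 1$.

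For the upper bound, I would first observe that finite support bounds the numerator of the waiting-time term uniformly by $\lambda p_{\max}^{2}$, so that $\mathbbm{E}[W]$ is at most a constant multiple of $\int_{0}^{p_{\max}} f(x)/(1-\rho(x))^{2}\, dx$. The change of variable $u=\rho(x)$ (with $du=\lambda x f(x)\,dx$) rewrites this integral as $\int_{0}^{\rho} du/\bigl(\lambda x(u)(1-u)^{2}\bigr)$; the dominant contribution comes from $u$ close to $\rho$, where $\lambda x(u)$ is bounded above and below, producing a $\Theta(1/(1-\rho))$ contribution. The residence-time piece, treated by the same substitution, contributes only $O(\log(1/(1-\rho)))$, a lower-order correction absorbed into the leading term. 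Hence $\mathbbm{E}[F_{\rho}^{\mathrm{SRPT}_{1,1}}]=O(1/(1-\rho))$.

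For the matching lower bound, I would restrict attention to arrivals with $\rho-\rho(x)\le 1-\rho$, i.e.\ $x$ within $\Theta(1-\rho)$ of $p_{\max}$. This set carries $f$-measure of order $1-\rho$, and on it $1-\rho(x)=\Theta(1-\rho)$ while the waiting-term numerator stays bounded below by a positive constant; hence $\mathbbm{E}[T(x)]\ge c/(1-\rho)^{2}$ on this set. Integrating a quantity of order $(1-\rho)^{-2}$ against $f$-mass of order $1-\rho$ yields $\mathbbm{E}[F_{\rho}^{\mathrm{SRPT}_{1,1}}]=\Omega(1/(1-\rho))$, matching the upper bound.

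The hard part will be making the change-of-variable asymptotics rigorous without stronger regularity hypotheses on $f$; in particular the inverse Jacobian $1/(\lambda x f(x))$ could blow up where $f$ vanishes. I would handle this by decomposing $[0,p_{\max}]$ into a neighborhood of $p_{\max}$ on which $f$ is bounded away from zero (using continuity of $f$ together with the assumption that $p_{\max}$ lies in the support) and its complement, on which all relevant integrands are $O(1)$ uniformly in $\rho$. The $1/(1-\rho)$ growth is then isolated to the good neighborhood, where the substitution argument is entirely standard.
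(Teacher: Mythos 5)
The paper does not prove this lemma; it is cited verbatim from Lin, Wierman, and Zwart~\cite{lin2011heavy}, so there is no in-paper proof to compare against. Your route via the Schrage--Miller conditional response-time formula is the natural one, and the overall structure is sound: the lower bound by restricting to the tail set $S_\rho=\{x:\rho-\rho(x)\le 1-\rho\}$ does work, since for $x\ge p_{\max}/2$ one has $\rho-\rho(x)=\lambda\int_x^{p_{\max}}tf(t)\,dt\in[\lambda(p_{\max}/2)\bar F(x),\lambda p_{\max}\bar F(x)]$, so $S_\rho$ carries $f$-measure $\Theta(1-\rho)$ regardless of the local behavior of $f$ at $p_{\max}$, and on $S_\rho$ one has $1-\rho(x)\le 2(1-\rho)$ while $\lambda\int_0^x t^2f(t)\,dt$ is bounded below once $\rho$ is close to $1$, giving $\Omega(1/(1-\rho))$.

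The gap is in the final paragraph: you claim that continuity of $f$ together with $p_{\max}$ lying in the support yields a neighborhood of $p_{\max}$ on which $f$ is bounded away from zero. That is false --- a continuous density can vanish at the right endpoint of its support (e.g.\ $f(x)\propto p_{\max}-x$ near $p_{\max}$), or oscillate arbitrarily close to zero. Fortunately the Jacobian concern you are trying to repair does not in fact arise in the dominant term: under $u=\rho(x)$ one has $f(x)\,dx=du/(\lambda x)$, so the waiting-time integral becomes $\int_0^{\rho}du/\bigl(\lambda x(u)(1-u)^2\bigr)$, whose weight is $1/(\lambda x(u))$, not $1/(\lambda x(u)f(x(u)))$; near $u=\rho$ the factor $x(u)\to p_{\max}>0$, and near $u=0$ the factor $(1-u)^{-2}$ is bounded while $\int_0^\rho du/(\lambda x(u))=\int_0^{p_{\max}}f=1$. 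The decomposition should therefore be by $x$ (say $[0,p_{\max}/2]$ versus $[p_{\max}/2,p_{\max}]$), using $1-\rho(p_{\max}/2)\ge \lambda(p_{\max}/2)\bar F(p_{\max}/2)>0$ on the first piece and $x\ge p_{\max}/2$ on the second, and no positivity of $f$ near $p_{\max}$ is needed anywhere. For the residence-time piece, where $f$ would appear in the denominator after substitution, simply avoid substituting: by Fubini it equals $\int_0^{p_{\max}}\bar F(t)\,dt/(1-\rho(t))$, and since $\bar F(t)\le 2(\rho-\rho(t))/(\lambda p_{\max})$ for $t\ge p_{\max}/2$ while $(\rho-\rho(t))/(1-\rho(t))\le 1$, this term is in fact $O(1)$, even smaller than the $O(\log(1/(1-\rho)))$ you claimed, and certainly lower order. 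With these repairs the proof goes through.
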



\subsection{Proof of optimality}
To achieve heavy traffic optimality, it suffices to show that the difference between the average response time under Algorithm~\ref{schedulingalgo} and the optimal algorithm is a lower order term, \ie,
\begin{align}\label{optcondi}
\lim_{n \rightarrow \infty}\frac{\mathbbm{E}{[F_{\rho}^{\mathrm{M-SRPT}}]}-\mathbbm{E}[F^{\mathrm{SRPT}_{1,N}}_{\rho}]}{\mathbbm{E}[F^{\mathrm{SRPT}_{1,N}}_{\rho}]}=0,
\end{align}
which holds according to Lemma~\ref{mg1bound} and inequality (\ref{unrefine})-(\ref{ineqq1}).

\subsection{Beyond Job Size Distribution with Finite Support }\label{secdis}
Up to this point, we have focused on job size distributions with finite support, which is rather restrictive. It is natural to consider various relaxations of this assumption.  In this section, we turn to other classes of job size distributions and the scenario when there are random number of tasks. These results provide complement to our developments about the theory of the asymptotic optimality of Modified SRPT.

\subsubsection{Exponential distribution and beyond}
\paragraph{M/M/$N$ model.} For the most elementary model of M/M/$N$, \ie, when the job service times are exponentially distributed, we have the following theorem, which only requires one additional assumption on task workload.
\begin{theorem}\label{mmonethm}
The average response time under Algorithm \ref{schedulingalgo} is asymptotic optimal in M/M/$N$, if task workload is finite.
\end{theorem}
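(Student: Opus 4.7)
The plan is to apply the multi-server upper bound developed in Lemma~\ref{singlemachinebound} (its multi-server incarnation), and to verify that (i) its derivation carries over from finite-support to exponential job sizes, and (ii) the resulting additive error is dominated by the single-machine SRPT baseline that lower-bounds the optimum.

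First I would revisit the derivation of the multi-server version of Lemma~\ref{singlemachinebound}. The proof proceeds by introducing the auxiliary system $\Sigma^{\ddag}$ with threshold $\chi^{(n)} = (1-\rho^{(n)})^{2/3}$ and then invoking the M$_{\chi}$-SRPT bound of Lemma~\ref{mchisrpt}. The only places where finiteness of the support of the job size distribution is actually used are (a) continuity of the density $f(\cdot)$, and (b) the estimate $\mathbbm{P}(p_i \leq \chi) \leq \delta_f \cdot \chi$ of (\ref{pdfineq}). For the exponential density $f(x) = \mu e^{-\mu x}$ both hold immediately: $f$ is continuous on $[0,\infty)$ and $\mathbbm{P}(p_i \leq \chi) = 1 - e^{-\mu \chi} \leq \mu \chi$, so one may take $\delta_f = \mu$. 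Since $\eta$ is finite by hypothesis the residual term is well-defined, and one obtains
\[
\mathbbm{E}[F^{\mathrm{M-SRPT}}_\rho] \leq \frac{1+O((1-\rho)^{4/3})}{1-\Theta((1-\rho)^{2/3})} \cdot \left(\mathbbm{E}[F^{\mathrm{SRPT}_{1,N}}_\rho] + O\left(\eta \cdot (1-\rho)^{-1/3}\log\frac{1}{1-\rho}\right)\right).
\]

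Next I would lower-bound the optimum via $\mathbbm{E}[F^{\pi^*}_\rho] \geq \mathbbm{E}[F^{\mathrm{SRPT}_{1,N}}_\rho]$ (the same single-machine comparison used in Section~\ref{flowlowerbd}) and observe that the right-hand side is the SRPT response time in an M/M/1 with arrival rate $\lambda$ and service rate $N\mu$. Because Lemma~\ref{mg1bound} is stated only for finite-support distributions, I would supply a short Little's Law argument: any stable work-conserving M/M/1 has $\mathbbm{E}[N_{\mathrm{sys}}] = \Theta(1/(1-\rho))$, whence $\mathbbm{E}[F^{\mathrm{SRPT}_{1,N}}_\rho] = \Theta(1/(1-\rho))$. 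Combining the two, the additive error $\eta (1-\rho)^{-1/3}\log(1/(1-\rho))$ is $o((1-\rho)^{-1})$, while the multiplicative prefactor tends to $1$, so $\mathbbm{E}[F^{\mathrm{M-SRPT}}_\rho]/\mathbbm{E}[F^{\mathrm{SRPT}_{1,N}}_\rho] \to 1$; with the above lower bound this yields heavy-traffic optimality.

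The main obstacle is confirming that every intermediate step in the proof of the multi-server Lemma~\ref{singlemachinebound}---especially the modified busy-period bound analogous to Proposition~\ref{busymodi} and the PSJF remaining-work estimate leading to (\ref{averageremainbound})---tolerates integration over the full half-line $[0,\infty)$. For the exponential distribution one has $1-\rho(x) = 1-\rho+\rho(1+\mu x)e^{-\mu x}$, and a direct asymptotic computation shows $\int_0^\infty f(x)/(1-\rho(x))\,dx = O(\log\log(1/(1-\rho)))$, comfortably sublinear in $1/(1-\rho)$; this keeps every residual term strictly lower order than the baseline, and is the step I expect to demand the most care.
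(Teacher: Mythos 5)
Your overall reduction — invoke the multi-server version of Lemma~\ref{singlemachinebound}, check the exponential density satisfies (\ref{pdfineq}) with $\delta_f=\mu$, and compare the additive error against the single-machine SRPT lower bound — is exactly the route the paper takes. However, the way you justify the baseline rate contains a genuine error.

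You assert that ``any stable work-conserving M/M/1 has $\mathbbm{E}[N_{\mathrm{sys}}]=\Theta(1/(1-\rho))$, whence $\mathbbm{E}[F^{\mathrm{SRPT}_{1,N}}_\rho]=\Theta(1/(1-\rho))$.'' This is false. The unfinished \emph{work} is policy-invariant across work-conserving disciplines in an M/G/1, but the number of jobs in system is not: SRPT minimizes $\mathbbm{E}[N_{\mathrm{sys}}]$ and for exponentially distributed sizes achieves $\mathbbm{E}[N_{\mathrm{sys}}]=o(1/(1-\rho))$. The correct heavy-traffic rate, which the paper obtains by citing Bansal's two-sided bound for M/M/1 SRPT, is
\begin{align*}
\mathbbm{E}[F^{\mathrm{SRPT}_{1,1}}_\rho] = \Theta\Big(\frac{1}{\mu(1-\rho)\log(1/(1-\rho))}\Big),
\end{align*}
a full $\log(1/(1-\rho))$ factor \emph{smaller} than the $\Theta(1/(1-\rho))$ rate you claim. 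Since the denominator in the ratio you must control is actually smaller, the domination check is tighter than you realized: comparing the additive error $O(\eta\,(1-\rho)^{-1/3}\log(1/(1-\rho)))$ to the true baseline yields a ratio $\Theta\bigl((1-\rho)^{2/3}(\log(1/(1-\rho)))^{2}\bigr)$, which still tends to $0$, so the conclusion survives — but only after this repair. As written, your Little's-law step is both incorrect as a statement about M/M/1 and, if taken at face value, would mask the fact that the margin is polylogarithmically thinner than you report. Replace it with the citation to Bansal's explicit constants (or the general $G^{-1}$ characterization of Lin et al., which for exponential sizes gives $G^{-1}(\rho)=\Theta(\log(1/(1-\rho)))$) and the proof is sound.

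Your side computation that $\int_0^\infty f(x)/(1-\rho(x))\,dx = \Theta(\log\log(1/(1-\rho)))$ for the exponential density is correct, and is indeed why the integrability over the full half-line poses no difficulty. Note also that the paper adds a remark you did not: the finite-$\eta$ hypothesis can be dropped when $\rho^{(n)}\geq 1-1/(\log n)^4$, using $\mathbbm{E}[\eta]=O(\log n)$ for exponential task sizes — a small generalization worth being aware of.
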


\begin{proof}
The conclusion follows from the fact that in M/M/$1$ \cite{DBLP:journals/orl/Bansal05},
\begin{align*}
\frac{1/(18e)}{\mu(1-\rho)\log (1/(1-\rho))} \leq \mathbbm{E}[F^{\mathrm{SRPT-1}}_{\rho}]\leq  \frac{7}{\mu(1-\rho)\log (1/(1-\rho))}.
\end{align*}
\end{proof}

\paragraph{M/M/$1$ model.} For single server with Poisson arrival and 
exponentially distributed workload, we show that Modified SRPT is asymptotic optimal without any finite workload assumptions. 
\begin{theorem}\label{mm1opt}
Algorithm \ref{schedulingalgo} is asymptotic optimal in M/M/$1$.
\end{theorem}
We first introduce the following propositions that will be used in our proof.

\begin{proposition}\label{pro1}
The expected value of the maximum of $n$ i.i.d exponentially distributed random variables with mean $1/\mu$ is 
\begin{align*}
(1/\mu)\cdot \sum_{k=1}^{n}{(1/k)}=\Theta(\log n).
\end{align*}
\end{proposition}

\begin{proposition}[\cite{bansal2018achievable}]\label{pro2}
For M/M/$1$ model and any work-conserving algorithm, let $n_{\mathrm{busy}}$ be the number of arrivals in a busy period, then 
\begin{align*}
\mathbbm{E}[n_{\mathrm{busy}}]=O\Big(\frac{1}{1-\rho}\Big).   
\end{align*}
\end{proposition}

\begin{proofof}{Theorem \ref{mm1opt}}
From the proof of Theorem \ref{lemma1}, it can be verified that the average response time under Algorithm \ref{schedulingalgo} is no more than
\begin{align}\label{ineqeta}
\mathbbm{E}{[F_{\rho}^{\mathrm{M-SRPT}}]}\leq \mathbbm{E}{[F_{\rho}^{\mathrm{SRPT}_{1,N}}]}+\Big(1+\mathbbm{E}[\eta]\Big)\cdot O\Big( \log \frac{1}{1-\rho}\Big).    
\end{align}
Based on Proposition \ref{pro1}, Proposition \ref{pro2} and Jenson's inequality, we have
\begin{align*}
\mathbbm{E}[\eta] \leq (1/\mu)\cdot  \mathbbm{E}\Big[\sum_{k=1}^{n_{\mathrm{busy}}}{(1/k)}\Big]\leq (1/\mu)\cdot  \sum_{k=1}^{\mathbbm{E}[n_{\mathrm{busy}}]}{(1/k)} = O\Big(\log \frac{1}{1-\rho}\Big).  
\end{align*}
This implies that
\begin{align*}
\mathbbm{E}{[F_{\rho}^{\mathrm{M-SRPT}}]}- \mathbbm{E}{[F_{\rho}^{\mathrm{SRPT}_{1,N}}]} \leq O\Big(\log^{2} \frac{1}{1-\rho}\Big), 
\end{align*}
which is a lower order term. The proof is complete.
\end{proofof}

\paragraph{M/GI/$N$ model.} In addition to exponential distribution, Lin et al. \cite{lin2011heavy} also gave a characterization of the heavy-traffic behavior of SRPT with general job size distribution. We first introduce the background on Matuszewska index.

\begin{definition}[Upper Matuszewska Index \cite{lin2011heavy}]
Let $f$ be a positive function defined in $[0,\infty)$, the upper Matuszewska index is defined as the infimum of $\alpha$ for which there exists a constant $C=C(\alpha)$ such that for each $\bar{\lambda}>1$,
\begin{align*}
\lim_{x\rightarrow \infty}\frac{f(\gamma x)}{f(x)}\leq C\gamma^{\alpha},
\end{align*}
holds uniformly for $\lambda\in [1, \bar{\lambda}]$.
\end{definition}

\begin{proposition}[\cite{lin2011heavy}] \label{flowboundfact}In an $M/GI/1$ queue, if the upper Matuszewska index of the job size distribution is less than $-2$, then
\begin{align*}
\mathbbm{E}{[F^{\mathrm{SRPT-1}}_{\rho}]}=\Theta \Big(\frac{1}{(1-\rho)\cdot G^{-1}(\rho)}\Big),
\end{align*}
where $G^{-1}(\cdot)$ denotes the inverse of $G(x)=\rho_{\leq x}/\rho=\int_{0}^{x}{tf(t) dt}/\mathbbm{E}[p_{i}]$.
\end{proposition}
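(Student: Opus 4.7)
The plan is to derive this heavy-traffic asymptotic from the classical Schrage--Miller formula for the conditional mean response time in M/GI/1 under SRPT,
\begin{align*}
\mathbbm{E}[F^{\mathrm{SRPT}-1}_{x}] \;=\; \frac{\lambda \int_{0}^{x} t^{2} f(t)\,dt + \lambda x^{2} \bar{F}(x)}{2(1-\rho(x))^{2}} \;+\; \int_{0}^{x} \frac{dt}{1-\rho(t)},
\end{align*}
where $\bar{F}(x)=\mathbbm{P}(p_i>x)$ and $\rho(t)=\lambda\int_0^t sf(s)\,ds$. Averaging against $f(x)$ and applying Fubini yields a sum of integrals whose asymptotic evaluation as $\rho\to 1$ is driven by the ``critical job size'' $y^{*}(\rho):=G^{-1}(\rho)$. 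At this scale $\rho(y^{*})=\rho^{2}$, so $1-\rho(y^{*})=(1-\rho)(1+\rho)\asymp 1-\rho$; jobs with $x\leq y^{*}$ are ``easy'' and contribute predictably, while jobs with $x>y^{*}$ are the ones whose tail must be controlled via the Matuszewska condition.

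I would first prove the upper bound by splitting the residence-time aggregate $\int_{0}^{\infty}\bar{F}(t)/(1-\rho(t))\,dt$ and the waiting-time aggregate at $y^{*}$. On $[0,y^{*}]$ the denominator $1-\rho(t)$ is bounded below by $1-\rho^{2}$, so these pieces reduce to elementary estimates producing an $O(y^{*}/(1-\rho))$ contribution which is then shown to match the target order $1/((1-\rho) y^{*})$ after weighting by probability masses. For $t>y^{*}$ I would invoke the upper Matuszewska index condition: its being strictly less than $-2$ furnishes a Potter-type inequality $\bar{F}(\gamma y^{*})\leq C\gamma^{-2-\delta}\bar{F}(y^{*})$ for some $\delta>0$ uniformly for large $y^{*}$ and $\gamma\geq 1$. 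This gives enough tail decay to bound $\int_{y^{*}}^{\infty}\bar{F}(t)/(1-\rho(t))\,dt$ by a change of variables $u=\rho(t)$ and then by the target order.

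For the matching lower bound, I would restrict attention to jobs whose size lies in a band $[y^{*}/2,\,y^{*}]$. For any such $x$, the residence-time term alone satisfies $\int_{0}^{x}dt/(1-\rho(t))\geq \int_{y^{*}/2}^{y^{*}}dt/(1-\rho(t))=\Omega(y^{*}/(1-\rho))$, since $\rho(t)\in[\rho^{2}/4,\rho^{2}]$ on this interval under the Matuszewska regularity. Using the same regularity to show that the mass of the band is of order $1/y^{*}$ (up to slowly varying factors), and combining, yields a lower bound of order $1/((1-\rho)y^{*})$ on the averaged response time.

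The main obstacle, in my view, is the waiting-time contribution from large jobs, namely bounding $\int_{y^{*}}^{\infty} x^{2}\bar{F}(x)f(x)/(1-\rho(x))^{2}\,dx$ by $O(1/((1-\rho)y^{*}))$. This is exactly the place where the upper Matuszewska index $<-2$ hypothesis is indispensable: with a heavier tail the second-moment-like quantity $\int_{0}^{x}t^{2}f(t)\,dt$ can grow fast enough to overwhelm the $(1-\rho(x))^{2}$ in the denominator and break the asymptotic. Executing this step rigorously requires applying Potter-type bounds to the Matuszewska class and translating integrability in $t$-space to integrability in $u=\rho(t)$-space via the substitution $du=\lambda t f(t)\,dt$, so that the integral can be evaluated against $1/(1-u)^{2}$ on $[\rho^{2},\rho]$.
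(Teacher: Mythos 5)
The paper does not prove this proposition: it is stated as a citation to Lin, Wierman, and Zwart~\cite{lin2011heavy}, so there is no internal proof for your attempt to be compared against. That said, your outline can be assessed on its own merits.

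Your \emph{upper-bound} plan is sound in spirit: starting from the Schrage formula for the conditional response time, splitting the waiting-time and residence-time aggregates at the critical scale $y^{*}=G^{-1}(\rho)$ (noting $\rho(y^{*})=\rho^{2}$, hence $1-\rho(y^{*})=(1-\rho)(1+\rho)$), and controlling the tail via Potter-type bounds furnished by the upper Matuszewska index being strictly below $-2$ is essentially the right strategy. You correctly identify the waiting-time contribution from $x>y^{*}$ as the place where the Matuszewska hypothesis is load-bearing.

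Your \emph{lower-bound} argument, however, has a concrete error. On the band $t\in[y^{*}/2,y^{*}]$ you have $\rho(t)\le\rho(y^{*})=\rho^{2}$, so
\[
\frac{1}{1-\rho(t)}\;\le\;\frac{1}{1-\rho^{2}}\;=\;O\!\left(\frac{1}{1-\rho}\right),
\]
which is an \emph{upper} bound on the integrand, not a lower bound. A lower bound on $1/(1-\rho(t))$ requires a lower bound on $\rho(t)$, and the range you assert, $\rho(t)\in[\rho^{2}/4,\rho^{2}]$, makes the integrand's infimum of order $1/(1-\rho^{2}/4)\to 4/3$, a constant as $\rho\to1$, not $\Theta(1/(1-\rho))$. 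For the exponential distribution one can compute $1-\rho(y^{*}/2)=\Theta(\sqrt{1-\rho})$, so $\int_{y^{*}/2}^{y^{*}}dt/(1-\rho(t))$ is not $\Omega(y^{*}/(1-\rho))$. Moreover, the band-mass estimate ``$\Theta(1/y^{*})$ up to slowly varying factors'' fails for exponential tails, where the mass in $[y^{*}/2,y^{*}]$ is $\Theta(\sqrt{1-\rho})$ rather than $\Theta(1/\log(1/(1-\rho)))$. To get the matching lower bound you would instead need to carefully account for contributions from sizes where $\rho(t)$ is within a constant factor of $\rho$ (not $\rho^{2}$), or follow the actual argument of~\cite{lin2011heavy}, which works directly with the distributional asymptotics of $G^{-1}$.
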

For example, exponential distribution has an upper Matuszewska index $M_{f}=-\infty$ and $G^{-1}(\rho)=\Theta(\log (1/(1-\rho)))$, hence Theorem \ref{mmonethm} is also implied by Proposition \ref{flowboundfact}. In addition, from Proposition~\ref{flowboundfact}, we can see that the following theorrem holds.
\begin{theorem}
The average response time under Algorithm \ref{schedulingalgo} is asymptotic optimal in M/GI/$N$, if task workload is finite, 
\begin{align*}
G^{-1}(\rho)= o\Big(\frac{1}{(1-\rho)\cdot \log(1/(1-\rho))}\Big), 
\end{align*}
and upper Matuszewska index of job size distribution is less than $-2$.
\end{theorem}
Examples include but not limited to \emph{Weibull distribution}, \emph{Pareto distribution} and \emph{regularly varying distributions}. Details are deferred in Appendix \ref{appendixdistributions}.

\subsubsection{Random number of tasks}

In the following proposition, we prove that the expected value of the maximum task size is finite, if the moment generating function of the task size distribution is finite.
\begin{proposition}\label{promax}
If the number of jobs and the number of tasks in each job are independently distributed with finite mean value, then the mean value of the maximum task size is no more than,
\begin{align*}
\mathbbm{E}[\eta]\leq \min_{s\in D(s)} \frac{\log(\mathbbm{E}[n_{t}])+\log m(s)}{s}<\infty,
\end{align*}
where $m(s)=\mathbbm{E}[e^{sp^{\ell}_{i}}]$ denotes the moment generating function of the task size distribution and $D(s)=\{s|m(s)<\infty\}$.
\end{proposition}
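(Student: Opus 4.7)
My plan is to use a standard Chernoff-style bound, exploiting the fact that the exponential function turns a maximum into something dominated by a sum. The whole argument is essentially two lines of inequality chasing plus Jensen.

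First I would fix an arbitrary $s \in D(s)$ with $s>0$ and write
\begin{align*}
e^{s\eta} \;=\; \max_{i,\ell} e^{s p_i^{(\ell)}} \;\leq\; \sum_{i=1}^{n}\sum_{\ell=1}^{n_i} e^{s p_i^{(\ell)}},
\end{align*}
so that the maximum is upper bounded by a random sum over all $n_t = \sum_{i=1}^{n} n_i$ tasks. Taking expectation, the independence hypothesis (the counts $n, n_1, n_2, \ldots$ are independent of each other and of the task workloads, and the task workloads are i.i.d.) lets me condition on $n_t$ and apply a Wald-style identity:
\begin{align*}
\mathbbm{E}\!\left[\sum_{i=1}^{n}\sum_{\ell=1}^{n_i} e^{s p_i^{(\ell)}}\right] \;=\; \mathbbm{E}[n_t]\cdot \mathbbm{E}[e^{s p_i^{(\ell)}}] \;=\; \mathbbm{E}[n_t]\cdot m(s).
\end{align*}
Hence $\mathbbm{E}[e^{s\eta}] \leq \mathbbm{E}[n_t]\cdot m(s)$.

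Next I would apply Jensen's inequality to the convex function $x\mapsto e^{sx}$, yielding $e^{s\,\mathbbm{E}[\eta]}\leq \mathbbm{E}[e^{s\eta}] \leq \mathbbm{E}[n_t]\cdot m(s)$. Taking logarithms and dividing by $s$ gives
\begin{align*}
\mathbbm{E}[\eta] \;\leq\; \frac{\log \mathbbm{E}[n_t] + \log m(s)}{s},
\end{align*}
and since $s$ was arbitrary in $D(s)\cap (0,\infty)$, I may minimize the right-hand side over this set. For the strict-finiteness claim, it suffices to exhibit one $s>0$ in $D(s)$ where the bound is finite; by hypothesis $\mathbbm{E}[n_t]=\mathbbm{E}[n]\cdot \mathbbm{E}[n_i]$ is finite and $D(s)$ is nonempty (this is the implicit assumption behind invoking the MGF), so we are done.

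The only subtle point, and the one I would flag as the place to be careful, is the conditioning step used to peel $n_t$ out of the expectation: I need to make sure the proposition's independence assumption is strong enough that the task workloads $\{p_i^{(\ell)}\}$ are independent of the random count $n_t$. As stated, this is exactly what ``independently distributed'' provides, so conditioning on $n_t$ turns the inner sum into $n_t$ i.i.d.\ copies of $e^{sp_i^{(\ell)}}$ and the identity $\mathbbm{E}[\sum_{j=1}^{n_t} Y_j]=\mathbbm{E}[n_t]\mathbbm{E}[Y_1]$ goes through without issue.
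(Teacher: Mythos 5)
Your proof is correct and uses the same Chernoff/MGF argument as the paper's, with a cleaner organization: where the paper conditions on $n_t$ before exponentiating (writing $e^{s\mathbbm{E}[\eta]}$ as a product over $k$ of $e^{s\mathbbm{E}[\eta\mid n_t=k]\mathbbm{P}(n_t=k)}$, which forces two separate Jensen steps, one for the convexity of $e^{sx}$ inside each conditional term and one for the concavity of $\log$ to replace $e^{\mathbbm{E}[\log n_t]}$ by $\mathbbm{E}[n_t]$), you bound $e^{s\eta}$ by the sum of $e^{sp_i^{(\ell)}}$ first, take expectation via the Wald-type identity to get $\mathbbm{E}[e^{s\eta}]\le \mathbbm{E}[n_t]\,m(s)$ directly, and then apply Jensen exactly once. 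The independence assumption you flag as the potential subtlety is exactly the one the paper also relies on, so there is no gap.
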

\begin{proof}
We first note that the expected value of the total number of tasks $\mathbbm{E}[n_{t}]=\mathbbm{E}[n]\cdot \mathbbm[n_{i}]<\infty$, given that $\mathbbm{E}[n], \mathbbm{E}[n_{i}]<\infty$. For any $s>0$, we have 
\begin{align*}
e^{s\mathbbm{E}[\eta]}=e^{\sum_{k\geq 0} s\mathbbm{E}[\eta|n_{t}=k]\cdot \mathbbm{P}(n_{t}=k)}=\prod_{k\geq 0}{e^{ s\mathbbm{E}[\eta|n_{t}=k]\cdot \mathbbm{P}(n_{t}=k)}},
\end{align*}
where $e^{ s\mathbbm{E}[\eta|n_{t}=k]}\leq \sum_{i=1}^{n_{t}}{\sum_{j=1}^{\ell}{\mathbbm{E}[e^{sp^{\ell}_{i}}]}}=k\cdot m(s)$, which implies that 
\begin{align*}
e^{s\mathbbm{E}[\eta]}\leq& \prod_{k\geq 0}{(k\cdot m(s))^{\mathbbm{P}(n_{t}=k)}}=m(s)\cdot e^{\sum_{k\geq 0}{\log k \cdot \mathbbm{P}(n_{t}=k)}}\\
=&m(s)\cdot e^{\mathbbm{E}[\log n_{t}]}\leq \mathbbm{E}[n_{t}]\cdot m(s),
\end{align*}
where the last inequality follows from the fact that $\mathbbm{E}[\log (n_{t})]\leq \log(\mathbbm{E}[n_{t}])$. Hence the expected maximum task size 
\begin{align*}
\mathbbm{E}[\eta]\leq \min_{s\in D(s)} \frac{\log(\mathbbm{E}[n_{t}])+\log m(s)}{s}<\infty.    
\end{align*}
\end{proof}

\begin{lemma}
In M/GI/$N$ queue, Algorithm~\ref{schedulingalgo} is heavy traffic optimal with random number of jobs and tasks, if the upper Matuszewska index of the job size distribution is less than $-2$ and $G^{-1}(\rho)=o(\frac{1}{(1-\rho)\cdot \log(1/(1-\rho))})$.
\end{lemma}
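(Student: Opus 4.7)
The plan is to combine the multi-server response-time bound of Lemma~\ref{singlemachinebound} with the heavy-traffic characterization of SRPT from Proposition~\ref{flowboundfact}, and then verify that under the stated hypotheses the additive overhead is of strictly lower order than the SRPT mean response time, so that $\mathbbm{E}[F^{\mathrm{M-SRPT}}_{\rho}]/\mathbbm{E}[F^{\pi^{*}}_{\rho}]\rightarrow 1$ as $\rho\rightarrow 1$.

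First, since the lemma allows a random number of tasks but the upper bound in Lemma~\ref{singlemachinebound} is stated in terms of the longest non-preemptive task workload $\eta$, I would invoke the preceding mgf-based proposition to conclude that $\mathbbm{E}[\eta]<\infty$. The proof of Lemma~\ref{singlemachinebound} bounds the pathwise response time by the length of a busy period in which $\eta$ enters only as a bound on the length of currently in-service non-preemptive tasks, so taking expectations reproduces the bound with $\mathbbm{E}[\eta]$ in place of $\eta$:
\begin{align*}
\mathbbm{E}[F^{\mathrm{M-SRPT}}_{\rho}]\leq \Big(\frac{1+O((1-\rho)^{4/3})}{1-\Theta((1-\rho)^{2/3})}\Big)\cdot\Big(\mathbbm{E}[F^{\mathrm{SRPT}_{1,N}}_{\rho}]+\mathbbm{E}[\eta]\cdot\log\frac{1}{1-\rho}\cdot\big[(1-\rho)^{2/3}+(1-\rho)^{-1/3}\big]\Big).
\end{align*}
As $\rho\rightarrow 1$ the multiplicative prefactor tends to $1$, and the single-server-with-speed-$N$ bound $\mathbbm{E}[F^{\pi^{*}}_{\rho}]\geq \mathbbm{E}[F^{\mathrm{SRPT}_{1,N}}_{\rho}]$ reduces the claim to showing that the additive overhead is $o\big(\mathbbm{E}[F^{\mathrm{SRPT}_{1,N}}_{\rho}]\big)$.

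Second, by Proposition~\ref{flowboundfact}, the upper Matuszewska index condition gives $\mathbbm{E}[F^{\mathrm{SRPT}_{1,N}}_{\rho}]=\Theta\big(1/((1-\rho)G^{-1}(\rho))\big)$. Among the two terms $(1-\rho)^{2/3}$ and $(1-\rho)^{-1/3}$ in the bracket, the latter dominates as $\rho\rightarrow 1$, so the ratio of the overhead to the SRPT benchmark is of order
\begin{align*}
\mathbbm{E}[\eta]\cdot\log\frac{1}{1-\rho}\cdot(1-\rho)^{2/3}\cdot G^{-1}(\rho).
\end{align*}
Substituting the hypothesis $G^{-1}(\rho)=o((1-\rho)^{-1/3+\varepsilon})$ makes this $o\big(\mathbbm{E}[\eta]\cdot\log(1/(1-\rho))\cdot(1-\rho)^{1/3+\varepsilon}\big)$, which vanishes since $\mathbbm{E}[\eta]<\infty$ and any polynomial decay dominates a logarithmic blow-up.

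The main obstacle is the transition from the deterministic-$\eta$ setting used throughout Section~\ref{uplowbound} to the random-$\eta$ setting here: one has to confirm that the busy-period dominations used to derive Lemma~\ref{singlemachinebound} indeed go through pathwise, with $\eta$ appearing only as an almost-sure bound on tasks actually in service at the tagged job's arrival, so that the final inequality can be averaged. The other delicate point is dimensional: the exponent $-1/3+\varepsilon$ in the hypothesis on $G^{-1}$ is exactly matched to the $(1-\rho)^{-1/3}$ term in the multi-server bound, and any weakening of the hypothesis would leave the additive overhead non-negligible relative to the SRPT benchmark.
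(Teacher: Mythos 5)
The paper does not supply a written-out proof of this lemma; it is left implicit as a corollary assembled from three ingredients: the mgf-based proposition establishing $\mathbbm{E}[\eta]<\infty$ under random task counts, the multi-server response-time bound (the second lemma labeled~\ref{singlemachinebound}), and Proposition~\ref{flowboundfact}. Your proposal assembles exactly those three pieces and carries out the right order-of-magnitude comparison, so the route is the intended one and the central computation (ratio $\asymp \mathbbm{E}[\eta]\,\log\tfrac{1}{1-\rho}\,(1-\rho)^{2/3}\,G^{-1}(\rho)\to 0$) is correct.

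Two caveats. First, your closing remark that the exponent $-1/3+\varepsilon$ in the hypothesis on $G^{-1}$ is ``exactly matched'' and that ``any weakening would leave the overhead non-negligible'' is inconsistent with your own arithmetic: with the multi-server bound as stated, the bracket contributes $(1-\rho)^{-1/3}$, the benchmark $\Theta\bigl(1/((1-\rho)G^{-1}(\rho))\bigr)$ cancels one full power of $(1-\rho)$, and the net requirement is only $G^{-1}(\rho)=o((1-\rho)^{-2/3+\varepsilon})$. So, taking the stated bound at face value, the lemma's hypothesis is strictly stronger than what the argument needs, and your tightness claim should be withdrawn or qualified. (It is possible that the $(1-\rho)^{-1/3}$ in that lemma's statement is a transcription error and should really be $(1-\rho)^{-2/3}$, given the choice $\chi=(1-\rho)^{2/3}$ in its proof; under that reading the hypothesis would be tight, but you are working from the lemma as written.) Second, the replacement of the deterministic $\eta$ by $\mathbbm{E}[\eta]$, which you flag but do not resolve, is a genuine gap that the paper shares: in the random-task-count setting $\eta$ is a maximum over the entire input and is a priori dependent on the tagged job's size $x$ and on $W^{\mathrm{PSJF}}_{\leq x}(r_x)$, so pulling $\mathbbm{E}[\eta]$ out of terms like $\mathbbm{E}[\mathsf{B}^{(\rho(x))}(\eta+x)]$ needs a decoupling argument (or a stationarity/independence assumption) that neither you nor the paper spells out. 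Your proposal is honest in naming this as the main obstacle, but the lemma would not be rigorously established without closing it.
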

\begin{proof}
The Lemma mainly follows from Proposition \ref{flowboundfact}, Proposition \ref{promax} and inequality (\ref{ineqeta}). 
\end{proof}

\section{Conclusion}\label{seccon}
In this work, we study the multitask scheduling  problem, for which the optimal algorithms and tight analyses remain widely open for almost all settings. We propose Modified-SRPT algorithm, which achieves a competitive ratio that is order optimal when the number of machines is constant. Another appealing and more important property of Modified-SRPT is that, the average response time incurred under Poisson arrival is asymptotic optimal when the traffic intensity goes to $1$, if job service times are finite or exponentially distributed with finite task workload. We also show that this bounded workload assumption can be removed in M/M/$1$.

\bibliography{scheduling}
\bibliographystyle{plain}

\appendix

\section{Proof of Theorem \ref{algotheorem}}\label{appendixcomproof}
\begin{proof}
Let $n_{\mathrm{M-SRPT}}(t)$ and $n_{\pi^{*}}(t)$ represent the number of jobs alive at time $t$ under
Modified SRPT and optimal scheduler respectively. Without loss of generality, in the following of the proof we assume $\log p_{\max}$ and $\log p_{\min}$ are integers. For $\forall t\geq 0$, the number of unfinished jobs under the optimal algorithm is no less than,
\begin{align}
n_{\pi^{*}}(t) \ge & \sum_{k=\log p_{\min}}^{\log p_{\max}+1}{\frac{W_{\pi^{*}}^{(k)}(t)}{2^{k}}}\\
= &\sum_{k=\log p_{\min}}^{\log p_{\max}+1}{\frac{\Big[W_{\pi^{*}}^{[k]}(t)-W_{\pi^{*}}^{[k-1]}(t)\Big]}{2^{k}}}\tag{definition of $W^{[k]}_{\pi^{*}}(t)$}\\
=&\frac{W_{\mathrm{\pi^{*}}}^{[\log p_{\max}+1]}(t)}{2^{\log p_{\max}+1}}+\sum_{k=\log p_{\min}}^{\log p_{\max}+1}{\frac{W_{\mathrm{\pi^{*}}}^{[k]}(t)}{2^{k+1}}}\\
\geq& \sum_{k=\log p_{\min}}^{\log p_{\max}+1}{\frac{W_{\pi^{*}}^{[k]}(t)}{2^{k+1}}}\label{jobaliveoptimal}.
\end{align}
On the other hand, the number of jobs alive under Algorithm~\ref{schedulingalgo} can be upper bounded in a similar fashion,
\begin{align*}
n_{\mathrm{M-SRPT}}(t)\leq & \sum_{k=\log p_{\min}}^{\log p_{\max}+1}{\frac{W_{\mathrm{M-SRPT}}^{(k)}(t)}{2^{k-1}}}\\
=&\sum_{k=\log p_{\min}}^{\log p_{\max}+1}{\frac{\Big[W_{\mathrm{M-SRPT}}^{[k]}(t)-W_{\mathrm{M-SRPT}}^{[k-1]}(t)\Big]}{2^{k-1}}}\tag{definition of $W^{[k]}_{\mathrm{M-SRPT}}(t)$}\\
=&\sum_{k=\log p_{\min}}^{\log p_{\max}}{\frac{W_{\mathrm{M-SRPT}}^{[k]}(t)}{2^{k}}}+\frac{W_{\mathrm{M-SRPT}}^{[\log p_{\max}+1]}(t)}{2^{\log p_{\max}}}\\
\leq & \sum_{k=\log p_{\min}}^{\log p_{\max}+1}{\frac{W_{\mathrm{M-SRPT}}^{[k]}(t)}{2^{k-1}}}
\end{align*}
Using Lemma~\ref{workloadlemma}, we are able to relate the number of unfinished jobs under two algorithms,
\begin{align*}
n_{\mathrm{M-SRPT}}(t) \leq& \sum_{k=\log p_{\min}}^{\log p_{\max}+1}{\frac{W_{\mathrm{M-SRPT}}^{[k]}(t)}{2^{k-1}}}\\
\leq &\sum_{k=\log p_{\min}}^{\log p_{\max}+1}{\frac{W_{\pi^{*}}^{[k]}(t)}{2^{k-1}}}+\sum_{k=\log p_{\min}}^{\log p_{\max}+1}{\frac{N\cdot (2^{k}+\eta)}{2^{k-1}}}\\
\leq &4n_{\pi^{*}}(t)+N\cdot \Big(4\log \alpha+4\frac{\eta}{p_{\min}}+4\Big),
\end{align*}
where the last inequality follows from inequality (\ref{jobaliveoptimal}). To summarize, the competitive ratio of Algorithm~\ref{schedulingalgo} satisfies that
\begin{align*}
\mathcal{CR}_{\mathrm{M-SRPT}}=&\frac{\sum_{t: n_{\mathrm{M-SRPT}}(t)<N}{n_{\mathrm{M-SRPT}}(t)}+\sum_{t: n_{\mathrm{M-SRPT}}(t)\geq N}{n_{\mathrm{M-SRPT}}(t)}}{F^{\pi^{*}}}\\
\leq & \frac{\sum_{t: n_{\mathrm{M-SRPT}}(t)<N}{n_{\mathrm{M-SRPT}}(t)}}{F^{\pi^{*}}} + \frac{\sum_{t: n_{\mathrm{M-SRPT}}(t)\geq N}{4n_{\pi^{*}}(t)}}{F^{\pi^{*}}}\\
&+\Big(4\log \alpha+4\frac{\eta}{p_{\min}}+4\Big)\cdot \frac{\sum_{t: n_{\mathrm{M-SRPT}}(t)\geq N}{N}}{F^{\pi^{*}}}\\
\leq &4\log \alpha+4\frac{\eta}{p_{\min}}+8	,
\end{align*}
where the second inequality is due to Lemma~\ref{workloadlemma}. The proof is complete.
\end{proof}

\section{List of Distributions \cite{lin2011heavy}}\label{appendixdistributions}
\begin{itemize}
\item \emph{Weibull distribution.} Weibull distribution has a cumulative distribution function of $F(x)=1-e^{-\mu x^{\alpha}}$, upper Matuszewska index $M_{f}=-\infty$ and $G^{-1}(\rho)=\Theta({(\log (1/(1-\rho))}^{1/\alpha})$. Indeed exponential distribution is a special case of the Weilbull distribution with $\alpha=1$.
\item \emph{Pareto distribution.} A power-law job size distribution is often modeled with Pareto distribution, which has a cumulative distribution function of $F(x)=1-(x_{\min}/x)^{\alpha}\;(\alpha\geq 4)$ for $x\geq x_{\min}$. The upper Matuszewska index $M_{f}=\alpha$ and $\mathbbm{E}{[F^{\mathrm{SRPT-1}}_{\rho}]}=\Theta(1/(1-\rho)^{\frac{\alpha+2}{\alpha+1}})$. 
\item \emph{Regularly varying distributions.} More generally, the optimality condition also holds for regularly varying job size distribution $RV_{\alpha}$\;($\alpha\in (-\infty,-4)$) with cumulative distribution function $F(x)=1-L(x)\cdot x^{\alpha}$, where $L(\cdot)$ is a slowly varying function, \ie, $\lim_{x\rightarrow \infty}\frac{L(cx)}{L(x)}=1$ for any fixed $c>0$. The upper Matuszewska index of $RV_{\alpha}$ is equal to $\alpha$ and there exists a slowly varying function $L^{\prime}(\cdot)$ such that $G^{-1}(\rho)=L^{\prime}(1/(1-\rho))\cdot (1-\rho)^{1/(\alpha+1)}$, which implies that $\mathbbm{E}{[F^{\mathrm{SRPT-1}}_{\rho}]}=\Omega(1/(1-\rho)^{\frac{\alpha+2}{\alpha+1}-\epsilon})$.
\end{itemize}


\end{document}